\newcommand{\tbh}[1]{\textsc{\textbf{#1}}}
\title{\vspace{-2em}Technical Report: Exploring Automatic Model-Checking of the Ethereum specification\footnote{The specifications, scripts, and experimental results can be accessed at:
        \\\url{https://github.com/freespek/ssf-mc}}}
\author{Igor Konnov\thanks{This work was supported by Ethereum Foundation
    under grant FY24--1535 in the 2024 Academic Grants Round.}\\
    \footnotesize Independent Researcher\and
    Jure Kukovec\footnotemark[2] \\ \footnotesize Independent Researcher\and
    Thomas Pani\footnotemark[2] \\ \footnotesize Independent Researcher\and
    Roberto Saltini\thanks{This work was supported by Ethereum Foundation
    under grant FY24--1536 in the 2024 Academic Grants Round and was partially
    done while at Consensys.} \\
    \footnotesize Independent Researcher\and
    Thanh Hai Tran\footnotemark[3] \\ \footnotesize Independent Researcher}
\date{}
\newcommand{\SpecOne}{$\textit{Spec}_1$}
\newcommand{\SpecTwo}{$\textit{Spec}_2$}
\newcommand{\SpecThree}{$\textit{Spec}_3$}
\newcommand{\SpecThreeB}{$\textit{Spec}_{3b}$}
\newcommand{\SpecThreeC}{$\textit{Spec}_{3c}$}
\newcommand{\SpecFour}{$\textit{Spec}_4$}
\newcommand{\SpecFourB}{$\textit{Spec}_{4b}$}
\lstdefinelanguage{tla}{morekeywords={MODULE,EXTENDS,CONSTANTS,CONSTANT,ASSUME,VARIABLES,VARIABLE,
          EXCEPT,UNCHANGED,TRUE,FALSE,IF,THEN,ELSE,LET,IN,SUBSET,DOMAIN,RECURSIVE},
          comment=[l]{\\*},
  morecomment=[s]{(*}{*)},
  mathescape=true,escapechar={@},
  commentstyle=\itshape\rmfamily,keywordstyle=\sffamily\bfseries
}
\lstdefinelanguage{alloy}{morekeywords={pred,sig,fact,set,one,extends,all,or,and,run,for,but},
          comment=[l]{---},
  mathescape=true,escapechar={@},
  basicstyle=\sffamily,commentstyle=\itshape\rmfamily,keywordstyle=\sffamily\bfseries
}
\lstdefinelanguage{smt}{keywords={assert, check-sat, get-model, set, define-fun, define-fun-rec,
        not, or, and, ite, let, forall, exists, true, false
    },
    alsoletter=-,
    morekeywords={declare-const, define-sort, check-sat, set-option,
        get-assertions, get-info, get-value, set-logic,
        declare-datatypes, declare-datatype, declare-fun
    },
    sensitive=true,
    comment=[l]{;}, morecomment=[s]{(*}{*)}, string=[b]" }
\lstdefinestyle{mystyle}{
    backgroundcolor=\color{white},   
    commentstyle=\color{gray},
    keywordstyle=\color{blue},
    numberstyle=\tiny\color{gray},
    stringstyle=\color{brown},
    basicstyle=\ttfamily\footnotesize,
    breakatwhitespace=false,         
    breaklines=true,                 
    captionpos=b,                    
    keepspaces=true,                 
    numbers=left,                    
    numbersep=5pt,                  
    showspaces=false,                
    showstringspaces=false,
    showtabs=false,                  
    tabsize=2
}
\tikzset{>=latex}
\tikzset{a/.style={-{Latex[length=2mm,width=1.5mm]}}}
\newtheorem{theorem}{Theorem}[section]
\newtheorem{lemma}[theorem]{Lemma}
\newtheorem{corollary}[theorem]{Corollary}
\newtheorem{definition}{Definition}
\newcommand{\iteDef}[4]{
  #1 \coloneqq \left\{
\begin{array}{ll}
      #2 &; #3 \\
      #4 &; \text{otherwise}\\
\end{array} 
\right. 
}
\newcommand{\tlap}{$\textsc{TLA}^{+}$}
\newcommand{\nat}{\mathbb N_0}
\newcommand{\op}{\mathrm{R}}
\newcommand{\nrop}{\mathrm{I}}
\newcommand{\mop}{\mathrm{R}_m}
\newcommand{\mapg}{\mathrm{G}_m}
\newcommand{\bb}{\mathrm{next}}
\newcommand{\Chain}{\mathrm{Stack}}
\newcommand{\tup}[1]{\left<\left<#1\right>\right>}
\newcommand{\htau}{\hat{\tau}}
\newcommand{\List}{\mathrm{List}}
\newcommand{\Seq}{\mathrm{Seq}}
\newcommand{\Set}{\mathrm{Set}}
\newcommand{\PVec}{\mathrm{PVec}}
\newcommand{\PSet}{\mathrm{PSet}}
\newcommand{\PMap}{\mathrm{PMap}}
\newcommand{\Concat}{\mathrm{Concat}}
\newcommand{\Callable}{\mathrm{Callable}}
\newcommand{\Le}{\mathrm{Le}}
\newcommand{\bool}{\mathrm{bool}}
\newcommand{\Bool}{\mathrm{Bool}}
\newcommand{\pyint}{\mathrm{int}}
\newcommand{\Int}{\mathrm{Int}}
\newcommand{\ApaFoldSet}{\mathrm{ApaFoldSet}}
\newcommand{\ApaFoldSeqLeft}{\mathrm{ApaFoldSeqLeft}}
\newcommand{\MkSeq}{\mathrm{MkSeq}}
\newcommand{\SetAsFun}{\mathrm{SetAsFun}}
\newcommand{\Push}{\mathrm{Push}}
\newcommand{\At}{\mathrm{At}}
\newcommand{\Indices}{\mathrm{Indices}}
\newcommand{\chain}[0]{\mathsf{ch}}
\newcommand{\C}[0]{\mathcal{C}}
\newcommand{\recallthm}[2]{{\medskip\noindent\bfseries Theorem~\ref{#1}.~}{\itshape #2}
}
\newcommand{\recallcorollary}[2]{{\medskip\noindent\bfseries Corollary~\ref{#1}.~}{\itshape #2}
}
\begin{document}

\maketitle

\begin{abstract}We investigate automated model-checking of the Ethereum specification, focusing
on the \emph{Accountable Safety} property of the 3SF consensus protocol. We
select 3SF due to its relevance and the unique challenges it poses for formal
verification. Our primary tools are~\tlap{} for specification and the Apalache
model checker for verification.

Our formalization builds on the executable Python specification of 3SF\@. To
begin, we manually translate this specification into~\tlap{}, revealing
significant combinatorial complexity in the definition of Accountable Safety.
To address this, we introduce several layers of manual abstraction:
(1)~replacing recursion with folds, (2)~abstracting graphs with
integers, and (3)~decomposing chain configurations.
To cross-validate, we develop encodings in SMT (CVC5) and Alloy.

Despite the inherent complexity, our results demonstrate that exhaustive
verification of Accountable Safety is feasible for small instances ---
supporting up to 7 checkpoints and 24 validator votes. Moreover, no violations
of Accountable Safety are observed, even in slightly larger configurations.
Beyond these findings, our study highlights the importance of manual
abstraction and domain expertise in enhancing model-checking efficiency and
showcases the flexibility of~\tlap{} for managing intricate specifications.\end{abstract}\newpage 

\setcounter{tocdepth}{2}  \tableofcontents

\section{Key Outcomes}\label{sec:outcomes}

During the course of this project, we have developed a series of specifications
in \tlap{}, Alloy, and SMT, each representing a different level of abstraction
of the 3SF protocol. Before we delve into the technical details, we summarize
the key outcomes of the project:

\paragraph{Exhaustive checking of \textit{AccountableSafety}.} Our primary
objective was to verify the \textit{AccountableSafety} property of the 3SF
protocol. Model-checking this property proved to be computationally challenging
due to the unexpectedly high combinatorial complexity of the protocol.
Nonetheless, we performed systematic experiments across various specifications
in \tlap{}, Alloy, and SMT (CVC5), representing both a direct translation and
different levels of abstraction of the protocol. The largest instances we
exhaustively verified to satisfy \textit{AccountableSafety} include up to 7
checkpoints and 24 validator votes (see Table~\ref{tab:alloy-mc} in
Section~\ref{sec:alloy-results}). This comprehensive verification gives us
absolute confidence that the modeled protocol satisfies
\textit{AccountableSafety} for systems up to this size.

\paragraph{No falsification of \textit{AccountableSafety}.} In addition to the
instances where we conducted exhaustive model-checking, we ran experiments on
larger instances, which exceeded generous time limits and resulted in timeouts.
Even in these cases, no counterexamples to \textit{AccountableSafety} were
found. Furthermore, in instances where we deliberately introduced bugs into the
specifications (akin to mutation testing), Apalache, Alloy and CVC5 quickly
generated counterexamples. This increases our confidence that the protocol
remains accountably safe, even for system sizes substantially larger than
those we were able to exhaustively verify.

\paragraph{Value of producing examples.} Even though checking accountable
safety proved to be challenging, our specifications are not limited to proving
only accountable safety. They are also quite useful for producing examples. For
instance, both Apalache and Alloy are able to quickly produce examples of
configurations that contain justified and finalized checkpoints. We highlight
this unique value of specifications that are supported by model checkers:

\begin{itemize}
  \item Executable specifications in Python require the user to provide program
    inputs. These inputs can be also generated randomly, though in the case of
    3SF, this would be challenging: We expect the probability of hitting
    ``interesting'' values, such as producing finalizing checkpoints, to be
    quite low.
  \item Specifications in the languages supported by proof systems usually do
    not support model finding. The TLAPS Proof System is probably a unique
    exception here, as~\tlap{} provides a common playground for the prover and the
    model checkers~\cite{KonnovKM22}.
\end{itemize}

\paragraph{Advantages of human expertise over automated translation.} Applying
translation rules to derive checkable specifications from existing artifacts can
serve as a valuable starting point. However, such translations often introduce
inefficiencies because they cannot fully capture the nuances of the specific
context. This can result in suboptimal performance. Therefore, while
translations provide a baseline, manually crafting specifications from the
outset is usually more effective. When relying on translated specifications, it is
essential to apply manual optimizations to ensure both accuracy and efficiency.

\paragraph{Value of \tlap{}.} \tlap{} is a powerful language for specifying and
verifying distributed systems. Although our most promising experimental results
were derived from the Alloy specification, the insights gained through
iterative abstraction in \tlap{} were indispensable.\ \tlap{} enabled us to
start with an almost direct translation of the Python code and progressively
refine it into higher levels of abstraction. This iterative process provided a
deeper understanding of the protocol and laid the groundwork for the more
efficient Alloy specification. The connection between the Python specification
and the Alloy specification is definitely less obvious than the tower of
abstractions that we have built in~\tlap{}.
 
\section{Introduction}

\subsection{Quick Intro to 3SF}

At the time of writing this report, Ethereum is using Gasper~\cite{buterin2020combining} as the underlying consensus protocol.
In Gasper, time is divided into slots, which represent intervals during which a new block is proposed to extend the blockchain and undergoes voting. 
Finalizing a block -- ensuring that it is permanently added to the blockchain and cannot be reversed -- typically requires 64 to 95 slots.
This delay in finality makes the network more vulnerable to potential
block reorganizations when the network conditions change, 
e.g., during periods of asynchronous network conditions.
In particular, this finalization delay heightens the network’s exposure to
Maximal Extractable Value (MEV) exploits, 
which could undermine the network’s integrity.
Additionally, the extended finalization period forces users to weigh the
trade-off between economic security and transaction speed.

To address these issues and speed up finality, D’Amato et al.~\cite{d20243} have recently introduced the \emph{3-slot-finality} (3SF) protocols 
for Ethereum that achieve finality within three slots after a proposal, hence realizing 3-slot finality.
This feature is particularly beneficial in practical scenarios where periods of synchrony and robust honest participation often 
last much longer than the time needed for finalization in the 3SF protocol.
Finally, the 3SF protocol enhances the practicality of large-scale blockchain networks by enabling the dynamically-available component, 
which handles honest participants who may go offline and come back online~\cite{pass2017sleepy}, 
to recover from extended asynchrony, provided at least two-thirds of validators remain honest and online for sufficient time. 

To that end, the 3SF protocols combine a partially synchronous finality gadget with two dynamically available consensus protocols – 
synchronous protocols that ensure safety and liveness even with fluctuating validator participation levels. 
This design is based on the \emph{ebb-and-flow} approach introduced in~\cite{neu2021ebb}. 
An ebb-and-flow protocol comprises two sub-protocols, each with its own confirmation rule, and each outputting a chain, with one serving 
as a prefix of the other. 
The first confirmation rule defines what is known as the \emph{available chain}, which provides liveness under dynamic participation
(and synchrony). 
The second confirmation rule defines the \emph{finalized chain}, and provides safety even under network partitions, but loses liveness 
either under asynchrony or in case of fluctuation in the participation level.

\subsection{This Project: Model Checking of 3SF}

In this research project, we targeted the 3SF specification in Python\footnote{Link to the Python specification:
\href{https://github.com/saltiniroberto/ssf/blob/ad3ba2c21bc1cd554a870a6e0e4d87040558e129/high_level/common/ffg.py}{https://github.com/saltiniroberto/ssf/.../ffg.py}} as the case study focusing only on the finality gadget protocol, which is mostly specified in the file \texttt{ffg.py}.
Our main goal was to demonstrate 
\emph{Accountable Safety} of this protocol by the means of model checking. 
Accountable Safety is the property which ensures that if two conflicting chains
(i.e. chains where neither is a prefix of the other) are finalized, then -- by
having access to all messages sent -- it is possible to identify at least $\frac{1}{3}$ responsible participants.

We have chosen the specification language~\tlap{} and the model checker
Apalache for the following reasons.\ \tlap{} remains a goto language for
specifying consensus algorithms. Among the rich spectrum of
specifications~\cite{tla-examples}, the most notable for our project are the
specifications of Paxos~\cite{lamport2001paxos}, Raft~\cite{Ongaro14}, and
Tendermint~\cite{abs-1807-04938,TendermintSpec2020}. AS consensus algorithms
are quite challenging for classical model checkers like TLC, we choose
the symbolic model checker Apalache~\cite{Apalache2024,KT19,KonnovKM22}.
It utilizes the
SMT solver~Z3~\cite{MouraB08} in the background. Apalache was used for model
checking of agreement and accountable safety of
Tendermint~\cite{TendermintSpec2020}. Moreover, four of the project
participants have developed Apalache in the past and know its strenghts and
weaknesses.

\subsection{Structure of the Report}

\begin{figure}

\begin{tikzpicture}[node distance=2cm, >=latex]
    \tikzset{mynode/.style={draw, rectangle, minimum width=3cm, minimum height=1cm, align=center,
            rounded corners
        }
    }

\node[mynode, minimum width=2cm, minimum height=1cm] (py)
        {\small\texttt{ffg.py}};

    \node[mynode, minimum width=4cm,
        minimum height=1cm, below=1cm of py] (spec1)
        {\SpecOne{}: {\small\texttt{spec1-2/ffg\_recursive.tla}}};

    \node[mynode, minimum width=3cm,
        minimum height=1cm, right=1cm of spec1] (spec2)
        {\SpecTwo{}: {\small\texttt{spec1-2/ffg.tla}}};

    \node[mynode, minimum width=3cm,
        minimum height=1cm, below=1cm of spec2] (spec3)
        {\SpecThree{}: {\small\texttt{spec3/ffg.tla}}};

    \node[mynode, minimum width=3cm,
        minimum height=1cm, left=1cm of spec3] (spec4)
        {\SpecFour{}: {\small\texttt{spec4/ffg\_inductive.tla}}};

    \node[mynode, minimum width=3cm,
        minimum height=1cm, below=1cm of spec4] (spec4b)
        {\SpecFourB{}};

    \node[mynode, minimum width=3cm,
        minimum height=1cm, below left=1cm and -2cm of spec3] (spec3b)
        {\SpecThreeB{}: SMT};

    \node[mynode, minimum width=3cm,
        minimum height=1cm, below right=1cm and -2cm of spec3] (spec3c)
        {\SpecThreeC{}: Alloy/SAT };

    \draw[a] (py) -- (spec1);
    \draw[a] (spec1) -- (spec2);
    \draw[a] (spec2) -- (spec3);
    \draw[a] (spec3) -- (spec4);
    \draw[a] (spec3) -- (spec3b);
    \draw[a] (spec3) -- (spec3c);
    \draw[a] (spec4) -- (spec4b);

\end{tikzpicture}
   \caption{The relation between the specification artifacts}\label{fig:artifacts}
\end{figure}

Figure~\ref{fig:artifacts} depicts the relations between the specifications
that we have produced in the project:

\begin{enumerate}
    \setcounter{enumi}{0}
    \item Section~\ref{sec:3sf} provides an introduction to the technical
    details of the 3SF protocol. We start our specification work from the
    executable specification in Python \texttt{ffg.py}.

    \item \SpecOne{}: This is the specification
        \texttt{spec1-2/ffg\_recursive.tla}. It is the result of a manual
        mechanical translation of the original executable specification in
        Python, which can be found in \texttt{ffg.py}. This specification is
        using mutually recursive operators, which are not supported by
        Apalache. As a result, we are not checking this specification. This
        specification is the result of our work in Milestones~1 and~3.
        It is discussed in Section~\ref{sec:spec1}.

    \item \SpecTwo{}: This is the specification \texttt{spec1-2/ffg.tla}. It is
        a manual adaptation of~\SpecOne{}. The main difference
        between~\SpecTwo{} and~\SpecOne{} is that~\SpecTwo{} uses ``folds''
        (also known as ``reduce'') instead of recursion. This specification is
        the result of our work in Milestones~1 and~2. It is discussed in
        Section~\ref{sec:spec2}.

    \item \SpecThree{}: This is the further abstraction of~\SpecTwo{} that uses
        the concept of a state machine, instead of a purely sequential
        specification (such as the Python code). This specification is the
        result of our work in Milestone~2. It is discussed in
        Section~\ref{sec:spec3}.

    \item \SpecFour{}: This is an extension of~\SpecThree{} that contains
        an inductive invariant in~\texttt{spec4/ffg\_inductive.tla}.
        This specification is the result of our work in Milestone~4.
        It is discussed in Section~\ref{sec:spec4}.

    \item \SpecFourB{} contains further abstractions and decomposition of
        configurations. This is the first~\tlap{} specification that allowed us
        to show accountable safety for models of very small size. This
        specification is the result of our work in Milestone~4.
        It is discussed in Section~\ref{sec:spec4b}.

    \item \SpecThreeB{} contains a specification in SMT using the theory of
        finite sets and cardinalities. In combination with the solver
        CVC5~\cite{BarbosaBBKLMMMN22}, this specification allows us to push
        verification of accountable safety even further. This specification is
        the result of our work in Milestone~4. It is discussed in
        Section~\ref{sec:smt}.

    \item \SpecThreeC{} contains a specification in
        Alloy~\cite{jackson2012software,alloytools}. With this specification,
        we manage to check all small configurations that cover the base case
        and one inductive step of the definitions. This specification is the
        result of our work in Milestone~4. It is discussed in
        Section~\ref{sec:alloy}.

    \item Appendix~\ref{section3} contains the translation rules and Appendix~\ref{proofs} contains detailed proofs
        that were conducted in Milestone~3.

\end{enumerate}

\section{Basic 3SF concepts}\label{sec:3sf}

In this section, we summarize the basic concepts of the 3SF protocol that this project depends on.
We refer the reader to the 3SF paper~\cite{d20243} for a more comprehensive explanation.

\paragraph*{Validators.} Participants of the protocol are referred to as \emph{validators} with $n$ being their total number.
Every validator is identified by a unique cryptographic identity and the public keys are common knowledge. 
Each validator has a \emph{stake} but for the purpose of this project each validator's stake is set to~1.

\paragraph*{Slots.} Time is divided into \emph{slots}. 
In ideal conditions a new block (see below) is expected to be proposed at the beginning of each slot.

\paragraph{Blocks and Chains.} 
A \emph{block} is a pair of elements, denoted as \( B = (b,p) \). Here, \( b \) represents the \emph{block body} -- essentially, the main content of the block which contains a batch of transactions grouped together.
Each block body contains a reference pointing to its \emph{parent} block. 
The second element of the pair, \( p \geq 0 \), indicates the \emph{slot} where the block \( B \) is proposed.
By definition, if $B_p$ is the parent of $B$, then $B_p.p < B.p$.
The \emph{genesis block} is the only block that does not have a parent. Its slot is \( p = 0 \).
Given the definition above, each different block $B$ implicitly identifies a different finite \emph{chain} of blocks starting from block $B$, down to the genesis block, by recursively moving from a block to its parent.
Hence, there is no real distinction between a block and the chain that it identifies.
So, by chain $\chain$, we mean the chain identified by the block $\chain$.
We write $\chain_1 \preceq \chain_2$ to mean that $\chain_1$ is a non-strict prefix of $\chain_2$.
We say that $\chain_1$ \emph{conflicts} with $\chain_2$ if and only if neither $\chain_1 \preceq \chain_2$ nor $\chain_2 \preceq \chain_1$ holds. 

\paragraph{Checkpoints.}
In the protocol described in~\cite{d20243}, a \emph{checkpoint} is a tuple $(\chain, c)$, where \(\chain \) is a chain and \( c \) is a slot signifying where \( \chain \) is proposed for justification (this concept is introduced and explained below).
However, for efficiency reasons, in the specification targeted by this project, a \emph{valid} checkpoint $\C$ is a triple $(H, c, p)$ where $H$ is the hash of a chain $\chain$, $c$ is the slot at which chain $\chain$ is proposed for justification, as per the definition above, and $p =  \chain.p$.
The total pre-order among checkpoints is defined:
$\C \leq \C'$ if and only if either \(\C.c < \C'.c\) or, in the case where \(\C.c = \C'.c\), then \(\C.p \leq \C'.p\). 
Also, $\C < \C'$ means $\C \leq \C' \land \C \neq \C'$.

\paragraph*{FFG Votes.}
Validators cast two main types of votes: \textsc{ffg-vote}s and \textsc{vote}s. 
Each \textsc{vote} includes an \textsc{ffg-vote}.
The specification targeted by this project only deals with \textsc{ffg-vote}s as the extra information included in  \textsc{vote}s has no impact on AccountableSafety.
An \textsc{ffg-vote} is represented as a tuple \([\textsc{ffg-vote}, \C_1, \C_2, v_i]\), where {$v_i$ is the validator sending the \textsc{ffg-vote}\footnote{Digital signatures are employed to ensure that $v_i$ is the actual sender and it is assumed that such digital signatures are unforgeable.}, while} \(\C_1\) and \(\C_2\) are checkpoints.
These checkpoints are respectively referred to as the \emph{source} (\(\C_1\)) and the \emph{target} (\(\C_2\)) of the \textsc{ffg-vote}.
Such an \textsc{ffg-vote} is \emph{valid}
if and only if both checkpoints are valid, 
\(\C_1.c < \C_2.c\) and  \(\C_1.\chain \preceq \C_2.\chain\). 
\textsc{ffg-vote}s effectively act as \emph{links} connecting the source and target checkpoints. Sometimes the whole \textsc{ffg-vote} is simply denoted as \(\C_1 \to \C_2\).

\paragraph*{Justification.}
A set of \textsc{ffg-vote}s is a \emph{supermajority set} if it contains valid \textsc{ffg-vote}s from at least \(\frac{2}{3}n\) distinct validators.
A checkpoint \(\C\) is considered \emph{justified} if it either corresponds to the genesis block, i.e., \(\C = (B_\text{genesis}, 0)\), or if there exists a supermajority set of links \(\{\C_i \to \C_j\}\) satisfying the following conditions. First,  for each link $\C_i \to \C_j$ in the set, {\(\C_i \to \C_j\) is valid and} \(\C_i.\chain \preceq \C.\chain \preceq \C_j.\chain\). Moreover, all source checkpoints \(\C_i\) in these links need to be already justified, and the checkpoint slot of \(\C_j\) needs to be the same as that of \(\C\) (\(\C_j.c=\C.c\)), for every \(j\). It is important to note that the source and target chain may vary across different votes.

\paragraph*{Finality.}
A checkpoint \(\C\) is \emph{finalized} if it is justified and there exists a supermajority link with source \(\C\) and potentially different targets \(\C_j\) where \(\C_j.c = \C.c + 1\). A chain \(\chain\) is finalized if there exists a finalized checkpoint \(\C\) with \(\chain = \C.\chain\). The checkpoint \(\C = (B_\text{genesis}, 0)\) is finalized by definition.

\paragraph*{Slashing.}
A validator \(v_i\) is subject to slashing for sending two \emph{distinct} \textsc{ffg-vote}s \(\C_1 \to \C_2\) and \(\C_3 \to \C_4\) if either of the following conditions holds: {\(\mathbf{E_1}\) (Double voting)} if \(\C_2.c = \C_4.c\), implying that a validator must not cast distinct \textsc{ffg-vote}s for the same checkpoint slot; or {\(\mathbf{E_2}\) (Surround voting)} if \(\C_3 < \C_1\)
 and \(\C_2.c < \C_4.c\), indicating that a validator must not vote using a lower checkpoint as source and must avoid voting within the span of its other votes.

 \begin{definition}[AccountableSafety]
  \label{def:acc-safety}
  AccountableSafety holds if and only if, upon two conflicting chains being finalized, 
 {by having access to all messages sent,} it is possible to slash at least $\frac{n}{3}$ of the validators.
\end{definition}

\subsection{Complexity of (model-checking) the protocol}

The 3SF protocol is intricate, with a high degree of combinatorial complexity,
making it challenging for automatic model checking. During our specification
work, we have observed multiple layers of complexity in the protocol:
\begin{itemize}
  \item The Python specification considers all possible graphs over all proposed
    blocks. From graph theory~\cite{cayley1878theorem}, we know that the number
    of labelled rooted forests on $n$ vertices is ${(n+1)}^{n-1}$. (Observe that
    this number grows faster than the factorial~$n!$.) This is the number of
    possible block graphs that the model checker has to consider for $n$ blocks.
  \item The protocol introduces a directed graph of checkpoints (pairs $(b,n)$
    of a block $b$ and an integer $n$) \emph{on top} of the block graph.
    Validator-signed votes form a third labeled directed graph over pairs of
    checkpoints. In addition, all of these edges have to satisfy arithmetic
    constraints.
  \item Justified and finalized checkpoints introduce an inductive structure
    that the model checker has to reason about. Essentially, the solvers have to
    reason about chains of checkpoints on top of chains of blocks.
  \item Finally, the protocol introduces set cardinalities, both for determining a
    quorum of validators and as a threshold for \textit{AccountableSafety}.
    Cardinalities are known to be a source of inefficiency in automated
    reasoning.
\end{itemize}

\section{\SpecOne{}: Translation from Python to \tlap{}}\label{sec:spec1}

\begin{figure}
    \centering
    \includegraphics[width=.9\textheight,angle=-90]{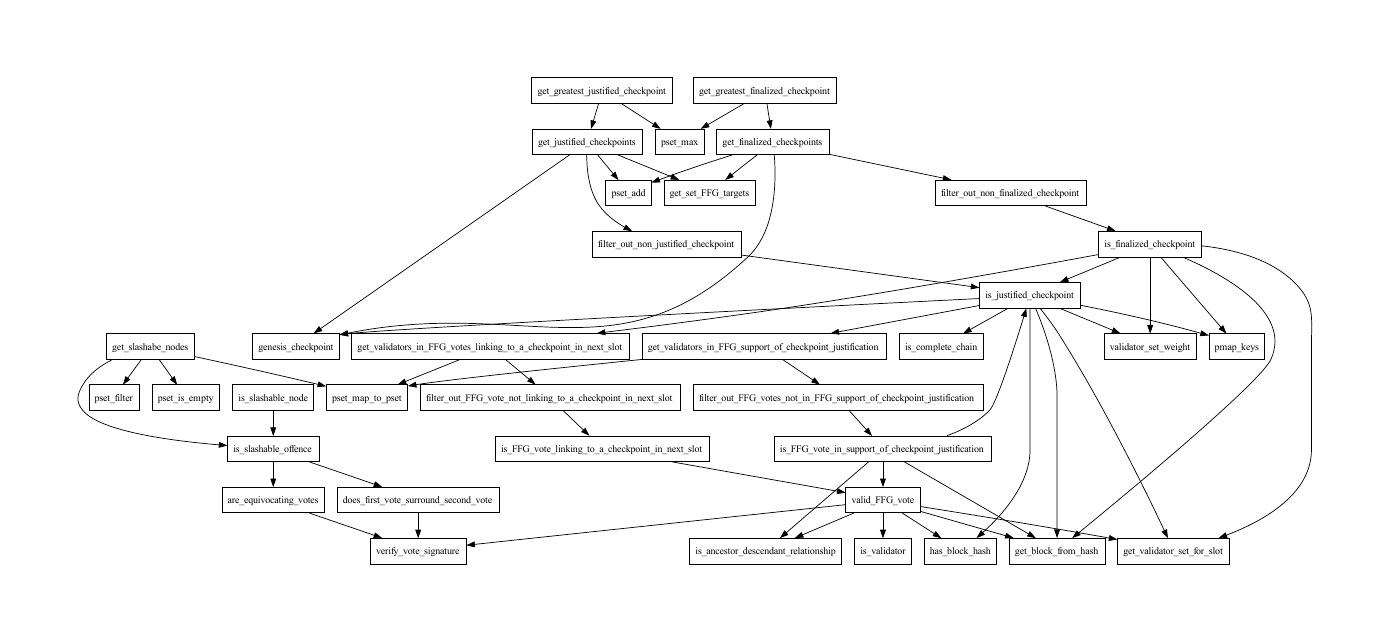}
    \caption{The callgraph of the 3SF specification Python code}
    \label{fig:callgraph}
\end{figure}

The Python specification consists of class / type definitions and a number of
functions that define the behavior of the 3SF protocol. We obtain \SpecOne{} by
performing a manual but mechanical translation of the Python specification to
\tlap{}. To this end, we first survey the Python code to identify its structure
and the program logic elements that it defines. We then translate these
functions to \tlap{} by following the principle of least surprise: we aimed to
preserve the syntax and semantics of the Python code as closely as possible --
at this stage, without regard to model-checking efficiency. We also translate
the data structures and types used in the Python code to their \tlap{}
counterparts. To this end, we have produced detailed translation rules, which we
present in Appendix~\ref{section3}.

\subsection{Shape of the Python Specification}

A few key points to note about the Python code are as follows:

\paragraph{Data structures and types.} Type definitions come in the form of
Python classes annotated with the \texttt{@dataclass} decorator, which
automatically generates constructors, comparator methods, and other boilerplate
code. Composite types are defined using the \texttt{pyrsistent} library, which
provides immutable data structures such as sets and maps.

\paragraph{Functions.} The function definitions are pure functions that
take some input and return some output -- side-effects are rare and limited to
local state. Some functions are
(mutually) recursive, as can be seen in Figure~\ref{fig:callgraph}.

\paragraph{Similarities between Python and \tlap{}.} Despite being a programming
language and specification language respectively, Python specifications written
in a functional manner (using immutable data types and pure functions) and
\tlap{} overlap somewhat in what they can express. For instance, both allow us
to express sets, structures / records, and comprehensions over these data
structures. For example, the Python code defines a record
\texttt{CommonNodeState} which we translate to a record in \tlap{}.

\subsection{Translating Basic Operators}

The Python specification defines a number of
foundational operators that are used in other functions. These operators are
mostly simple and can be translated directly to built-in operators in \tlap{}.
To take a concrete example, observe the definition of \texttt{pset\_filter} in
Figure~\ref{py_filter}.

\begin{figure}
  \begin{lstlisting}[language=Python,style=mystyle]
def pset_filter(p: Callable[[T1], bool], s: PSet[T1]) -> PSet[T1]:
  r: PSet[T1] = pset()
 
  for e in s:
    if p(e):
      r = r.add(e)
  
  return r\end{lstlisting}
\caption{\textsf{pset\_filter} definition\label{py_filter}}
\end{figure}

Notice that, while filter is not one of the built-in operators of the
\texttt{pyrsistent} library, it is relatively simple to define the
\texttt{pset\_filter} filter function, which returns a set that contains exactly
all of the elements of $s$, for which the Boolean predicate $p$ holds true.

In \tlap{}, however, filtering is a language primitive, so if we can translate a
python set $s$ to a \tlap{} set $\hat{s}$, and a Python predicate $p$ (of the
above type) to a \tlap{} predicate $\hat{p}$, we can translate
$\mathsf{pset\_filter}(p, s)$ to $\{ x \in \hat{s}\colon \hat{p}(x) \}$.

We take this idea, and apply it to every definition in the file
\texttt{pythonic\_code\_generic.py}, attempting to identify \tlap{}-equivalents
(w.r.t.\ semantics) for each defined function.  Later on, in
Appendix~\ref{section3}, we give a formal characterization of all of these
equivalencies, in the form of rewriting rules.

\subsection{Translating Complex Operators}

Some further constructs of the Python language are particularly
relevant for the translation, and we describe how we handle them in the next
paragraphs. We do, however, not give explicit rules for the translation of
\emph{all} Python language primitives to \tlap{} in general, since attempting to
establish those for the full language would vastly exceed the scope of this
project.

\paragraph{Assignments and local variables.} There are certain idiosyncrasies to
do with the fact that Python is imperative and executable, and \tlap{} is not.
For instance, Python allows for arbitrary variable assignment and reassignment,
as well as the introduction of local variables. There are two constructs
available in \tlap{} which can be used to express variable assignment:
\begin{itemize}
  \item a state-variable update $a' = e$
  \item a LET-IN local operator definition $\mathrm{LET}\; v \defeq e
    \;\mathrm{IN}\; f$
\end{itemize}

In principle, it is up to the translator to evaluate which of the two better
captures the semantics of the Python code. However, as we have noted, the Python
code is mostly functional over immutable data structures. In this project, we
have observed that we can establish a clear separation, which always translates
local/auxiliary variables to LET-IN definitions, and keeps the state-variable
updates for the state of the system.

\paragraph{Runtime exceptions.} Python code may throw at any time.
In this project in particular, the Python code is written in a defensive
manner, with many runtime checks in the form of \texttt{Requires} assertions
that perform as preconditions. 
In general, this behavior is impossible to replicate without very convoluted
\tlap{} code, so we either have to omit those assertions, or return an
unspecified value of the correct type if the requirement is not met.
We note, however, that in the present Python specification, functions are only
called if their preconditions are met (for example, the function
\texttt{get\_parent(block, node\_state)} contains a precondition
\texttt{Requires(has\_parent(block, node\_state))}, but is only ever called on
code paths where this precondition holds). Thus, we can safely omit these
assertions in the translation.

\paragraph{Control flow.} Given the functional nature of the Python
specification, and given that both languages support if-statements, those
translate directly (although since \texttt{elif} in Python has no direct
equivalent, we have to chain two IF-ELSEs in \tlap{}). Iteration (in the form of
\texttt{for}-\texttt{in} loops) only occurs in the Python code to compute the set
comprehension \texttt{pset\_filter} which we directly translate to the \tlap{}
set comprehension operator as discussed above.

\paragraph{Recursion.} As we have mentioned, the Python code contains a number
of (mutually) recursive functions. Native \tlap{} supports recursive operators
at the language level, which must be explicitly annotated with the keyword
\texttt{RECURSIVE}, which we introduce manually.

However, Apalache does not support recursive \tlap{} operators at the
model-checking level. This means that we can translate recursive Python
functions -- as written -- into \tlap{} \SpecOne{}, with the knowledge that we
will need to transform them into equivalent constructs in \SpecTwo{} to
facilitate model checking. We discuss this in more detail in the next section on
\SpecTwo{}.

\subsection{Overall Example}

As a final example, Figure~\ref{py_adr} shows a crucial operator of the Python
specification, \texttt{is\_ancestor\_descendant\_relationship}, and
Figure~\ref{tla_adr} shows its \tlap{} equivalent. It is a recursive function,
uses if-else control flow, and references \texttt{has\_parent} and
\texttt{get\_parent}, which are \tlap{}-translated python operators themselves. 

In summary, this example demonstrates that our translation strongly preserves
the syntactic shape between Python functions / compound statements and \tlap{}
operators.

\begin{figure}

  \begin{lstlisting}[style=mystyle,language=Python]
  def is_ancestor_descendant_relationship(
    ancestor: Block, 
    descendant: Block, 
    node_state: CommonNodeState
  ) -> bool:
    """
    It determines whether there is an ancestor-descendant relationship between two blocks.
    """
    if ancestor == descendant:
        return True
    elif descendant == node_state.configuration.genesis:
        return False
    else:
        return (
            has_parent(descendant, node_state) and
            is_ancestor_descendant_relationship(
                ancestor,
                get_parent(descendant, node_state),
                node_state
            )
        )\end{lstlisting}
  \caption{Python definition
      of~\textsf{is\_ancestor\_descendant\_relationship} Python}\label{py_adr}
\end{figure}

\begin{figure}
  \includegraphics[width=\textwidth]{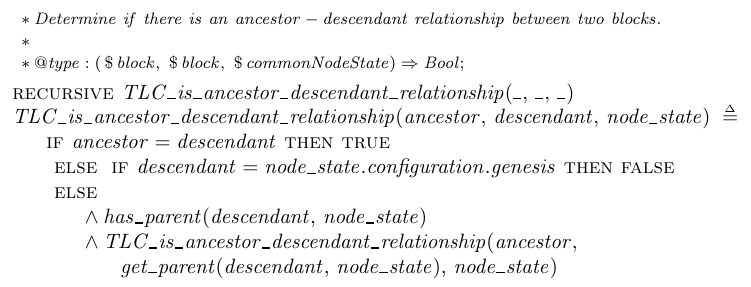}
  \caption{Definition
    of~\textsf{is\_ancestor\_descendant\_relationship} in \tlap{}}\label{tla_adr}
\end{figure}

\section{\SpecTwo{}: Fold-Based Specification in \tlap{}}\label{sec:spec2}

\SpecTwo{} addresses some of the limitations inherent in the straightforward
translation of \SpecOne{} from the Python executable specification.

\subsection{Translating Recursive \tlap{} Operators}

The primary goal of \SpecTwo{} is to maintain the semantic structure of
\SpecOne{} while eliminating recursion. In \SpecOne{}, (mutually) recursive
operators model key aspects of protocol behavior, such as the block tree and
block justification. Apalache does not natively support recursive
operators\footnote{\url{https://apalache-mc.org/docs/apalache/principles/recursive.html}},
thus it cannot be used immediately to model-check \SpecOne{}. While the
explicit-state \tlap{} model checker TLC supports recursive operators, it does
not scale to model-checking of this problem.

To resolve this, we reformulate \SpecOne{} into \SpecTwo{}, by substituting
(mutually) recursive constructs with bounded
\texttt{fold}~operations\footnote{In functional programming, \texttt{fold} (sometimes \texttt{reduce}) is a
higher-order function that accepts a combining operation and an iterable data
structure, and applies the operation to each element of the data structure
to compute a single return value.}, which enable the same iterative
computations to be performed in a non-recursive manner. We provide a set of
translation rules to convert recursive operators to bounded \texttt{fold}
operations, see Appendix~\ref{subsec:recrules}.

Let's reconsider the example in Figure~\ref{tla_adr}. In
Figure~\ref{fig:relationship-folds}, we present its equivalent that is using
folds. Following the rules in Appendix~\ref{subsec:recrules}, we have
constructed an accumulator \texttt{FindAncestor} that is passed to the fold
operator. Its body represents a single iteration of the recursive definition: it
first checks if its second parameter, indicating if an ancestor-descendant match
has already been found, is \texttt{TRUE} -- in this case, it simply propagates
the result \texttt{Pair(last\_block, TRUE)} to the next iteration (this is
equivalent to the base case of the recursive definition). If the match has not
been found, it checks if the current block is the genesis block or an orphaned
block, and returns \texttt{FALSE} if it is. Otherwise, it retrieves the parent
block of the current block, checks if it is the ancestor block, and passes
\texttt{Pair(parent, parent = ancestor)} to the next iteration. We pass this
accumulator to the fold operator, together with the initial value and the
structure to fold over (in this case, a generic sequence of length
\texttt{MAX\_SLOT}, which ensures that the fold is performed over all blocks in
the chain).

\begin{figure}
  \includegraphics[width=\textwidth]{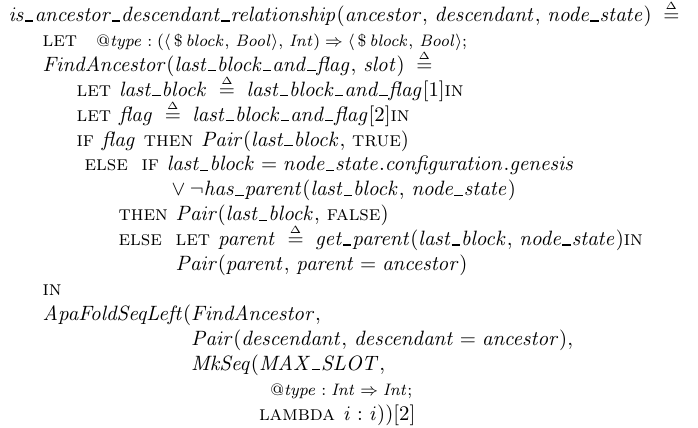}
  \caption{The recursive definition converted to bounded iteration with folds}\label{fig:relationship-folds}
\end{figure}

\subsection{An Optimization: Flattening Nested Folds}

Initial model checking experiments with \SpecTwo{} revealed significant
challenges related to memory consumption, stemming from the high number of SMT
constraints emitted by Apalache for nested fold operations, which in turn mirror
the complexity of the original nested recursive structures from the Python
specification.

To address these issues, we introduce a manual optimization strategy that
involves flattening nested fold operations. This technique transforms nested
folds into a more manageable structure by employing additional \tlap{} state
variables, similar to memoization or prophecy variables.

For example, we introduce a new \tlap{} state variable
\texttt{PRECOMPUTED\_IS\_ANCESTOR\_DESCENDANT\_RELATIONSHIP} to store
memoized ancestor-descendant relationships and initialize it with the results
of the fold operation above, see Figure~\ref{fig:relationship-memo}.

\begin{figure}
  \includegraphics[width=\textwidth]{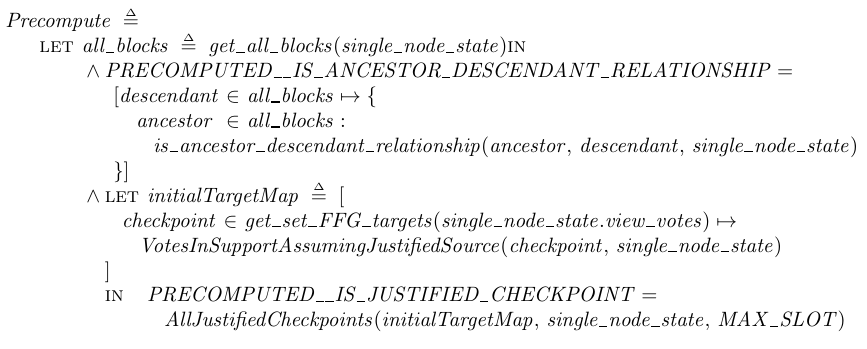}
  \caption{Memoizing the ancestor-descendant relationship}\label{fig:relationship-memo}
\end{figure}

What does this give us? For instance, instead of re-evaluating the fold
operation each time we need to check if two blocks are in an
ancestor-descendant relationship, we can directly access the memoized result in
a much more efficient map lookup, as shown in
Figure~\ref{fig:conflicting-memo}.

\begin{figure}[h]
  \includegraphics[width=\textwidth]{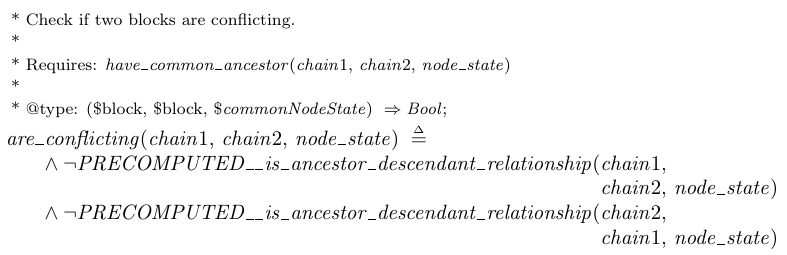}
  \caption{The definition of conflicting blocks, using memoization}\label{fig:conflicting-memo}
\end{figure}

To further improve our confidence in the correctness of this optimization, we
could produce a proof in TLAPS or run Apalache to show functional equivalence.

\subsection{Checking the Specification}

We can query the specification for reachable protocol states using Apalache.
For example, we can check if a nontrivial finalized checkpoint exists by writing an
invariant that we expect not to hold. If the invariant below is violated,
Apalache will produce an example of a finalized checkpoint as a counterexample,
see~Figure~\ref{fig:finalized-example}.

\begin{figure}[h]
  \includegraphics[width=\textwidth]{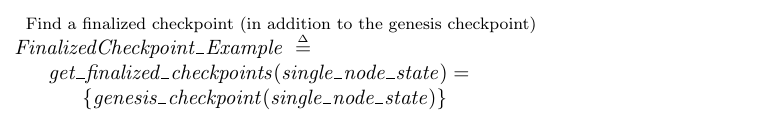}
  \caption{A false invariant for producing an example}\label{fig:finalized-example}
\end{figure}

Obviously, we can also check \textit{AccountableSafety} by supplying it as an
invariant to Apalache, see~\ref{fig:accountable-safety}.

\begin{figure}[h]
  \includegraphics[width=\textwidth]{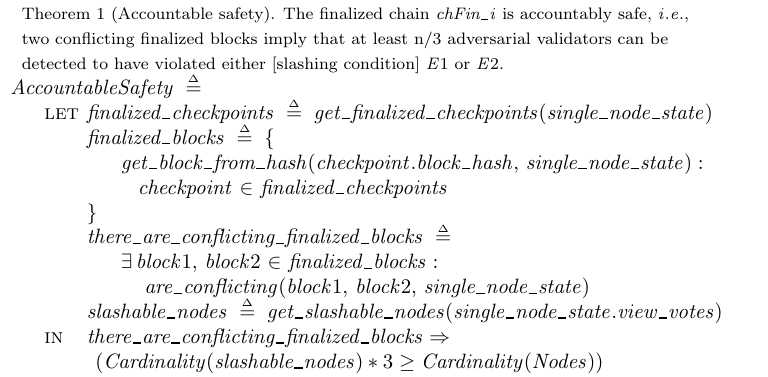}
  \caption{A state invariant for accountable safety}\label{fig:accountable-safety}
\end{figure}

Table~\ref{tab:spec2} shows the results of model checking \SpecTwo{} with
Apalache. We can see that generating examples of reachable protocol states and
verifying \textit{AccountableSafety} is infeasible due to the high computational
complexity of the specification.

\begin{table}
    \centering
    \begin{tabular}{ll}
        \tbh{Property} & \tbh{Time} \\ \toprule
        Example: conflicting blocks & timeout ($>40$h) \\
        Example: finalized \& conflicting blocks & timeout ($>40$h) \\
        AccountableSafety & timeout ($>40$h) \\ \bottomrule
    \end{tabular}
    \caption{Model checking \SpecTwo{} with Apalache.}\label{tab:spec2}
\end{table}

These results are not surprising -- the solver has to consider both reachability
properties for all possible block graphs, and all possible FFG voting scenarios
on top of these graphs.

To further evaluate \SpecTwo{}, we fix the block graph\footnote{We consider forests, rather than just trees, because the specification allows for that. Specifically, forests can occur when only some of the blocks of a given chain are received by a node. In the specification, the set of blocks received corresponds to the map \texttt{CommonNodeState.view\_blocks}.} --- this way the solver
only has to reason about voting. We encode three example block graphs: a
single, linear chain (\texttt{SingleChain}), a minimal forked chain of three
blocks (\texttt{ShortFork}), and a forest of disconnected chains
(\texttt{Forest}). Table~\ref{tab:spec2_fixed} shows the results of model
checking \SpecTwo{} for these fixed block graphs.

To understand the structure of those small graphs, see
Figure~\ref{fig:block-graphs}. In the simple
cases~\ref{fig:three}--\ref{fig:single}, we have one or two chains that form a
tree. In a more general case like in Figure~\ref{fig:forest}, a graph is a
forest. The graphs in Figures~\ref{fig:tricky1}--\ref{fig:tricky2} do not
represent chains.  The graph~\texttt{I1} is a direct-acyclic graph but not a
forest. The graph~\texttt{I2} has a loop.

\begin{figure}

\centering
\begin{subfigure}{.35\textwidth}
  \centering
  \begin{tikzpicture}
    \tikzset{n/.style={circle, fill=black, inner sep=2pt}}

    \node[n] (n0) at (0,0) {};
    \node[n] (n1) at (1,.5) {};
    \node[n] (n1') at (1,-.5) {};
      
    \draw[a] (n1) -- (n0);
    \draw[a] (n1') -- (n0);
  \end{tikzpicture}

  \caption{Short fork [M3]}\label{fig:three}
\end{subfigure}
\begin{subfigure}{.3\textwidth}
  \centering
  \begin{tikzpicture}
    \tikzset{n/.style={circle, fill=black, inner sep=2pt}}

    \node[n] (n0) at (0,0) {};
    \node[n] (n1) at (1,.5) {};
    \node[n] (n2) at (2,.5) {};
    \node[n] (n1') at (1,-.5) {};
      
    \draw[a] (n1) -- (n0);
    \draw[a] (n2) -- (n1);
    \draw[a] (n1') -- (n0);
  \end{tikzpicture}

  \caption{Four blocks [M4a]}\label{fig:four-top}
\end{subfigure}
\begin{subfigure}{.3\textwidth}
  \centering
  \begin{tikzpicture}
    \tikzset{n/.style={circle, fill=black, inner sep=2pt}}

    \node[n] (n0) at (0,0) {};
    \node[n] (n1) at (1,.5) {};
    \node[n] (n1') at (1,-.5) {};
    \node[n] (n2') at (2,-.5) {};
      
    \draw[a] (n1) -- (n0);
    \draw[a] (n1') -- (n0);
    \draw[a] (n2') -- (n1');
  \end{tikzpicture}

  \caption{Four blocks [M4b]}\label{fig:four-bottom}
\end{subfigure}
\begin{subfigure}{.3\textwidth}
  \centering
  \begin{tikzpicture}
    \tikzset{n/.style={circle, fill=black, inner sep=2pt}}

    \node[n] (n0) at (0,0) {};
    \node[n] (n1) at (1,.5) {};
    \node[n] (n2) at (2,.5) {};
    \node[n] (n1') at (1,-.5) {};
    \node[n] (n2') at (2,-.5) {};
      
    \draw[a] (n1) -- (n0);
    \draw[a] (n2) -- (n1);
    \draw[a] (n1') -- (n0);
    \draw[a] (n2') -- (n1');
  \end{tikzpicture}

  \caption{Five blocks [M5a]}\label{fig:five1}
\end{subfigure}
\begin{subfigure}{.3\textwidth}
  \centering
  \begin{tikzpicture}
    \tikzset{n/.style={circle, fill=black, inner sep=2pt}}

    \node[n] (n0) at (0,0) {};
    \node[n] (n1) at (1,0) {};
    \node[n] (n2) at (2,.5) {};
    \node[n] (n2') at (2,-.5) {};
      
    \draw[a] (n1) -- (n0);
    \draw[a] (n2) -- (n1);
    \draw[a] (n2') -- (n1);
  \end{tikzpicture}

  \caption{Five blocks [M5b]}\label{fig:five2}
\end{subfigure}
\begin{subfigure}{.35\textwidth}
  \centering
  \begin{tikzpicture}
    \tikzset{n/.style={circle, fill=black, inner sep=2pt}}

    \node[n] (n0) at (0,0) {};
    \node[n] (n1) at (1,.5) {};
    \node[n] (n2) at (2,.5) {};
    \node[n] (n3) at (3,.5) {};
    \node[n] (n1') at (1,-.5) {};
    \node[n] (n2') at (2,-.5) {};
    \node[n] (n3') at (3,-.5) {};
      
    \draw[a] (n1) -- (n0);
    \draw[a] (n2) -- (n1);
    \draw[a] (n3) -- (n2);
    \draw[a] (n1') -- (n0);
    \draw[a] (n2') -- (n1');
    \draw[a] (n3') -- (n2');
  \end{tikzpicture}

  \caption{Seven blocks [M7]}\label{fig:seven1}
\end{subfigure}
\begin{subfigure}{.3\textwidth}
  \centering
  \begin{tikzpicture}
    \tikzset{n/.style={circle, fill=black, inner sep=2pt}}

    \node[n] (n0) at (0,0) {};
    \node[n] (n1) at (1,0) {};
    \node[n] (n2) at (2,0) {};
    \node[n] (n3) at (3,0) {};
    \node[n] (n4) at (4,0) {};
      
    \draw[a] (n1) -- (n0);
    \draw[a] (n2) -- (n1);
    \draw[a] (n3) -- (n2);
    \draw[a] (n4) -- (n3);
  \end{tikzpicture}

  \caption{Single chain}\label{fig:single}
\end{subfigure}
\begin{subfigure}{.35\textwidth}
  \centering
  \begin{tikzpicture}
    \tikzset{n/.style={circle, fill=black, inner sep=2pt}}

    \node[n] (n0) at (0,0) {};
    \node[n] (n1) at (1,0.5) {};
    \node[n] (n2) at (2,0.5) {};
    \node[n] (n3) at (3,0.5) {};
    \node[n] (n4) at (4,0.5) {};
    \node[n] (n5) at (1,-.5) {};
    \node[n] (n6) at (2,-.5) {};
    \node[n] (n7) at (3,-.5) {};
    \node[n] (n8) at (1,-1) {};
    \node[n] (n9) at (1,-1.5) {};
    \node[n] (n10) at (2,-1.5) {};
      
    \draw[a] (n1) -- (n0);
    \draw[a] (n2) -- (n1);
    \draw[a] (n3) -- (n2);
    \draw[a] (n4) -- (n3);
    \draw[a] (n5) -- (n0);
    \draw[a] (n6) -- (n5);
    \draw[a] (n7) -- (n6);

    \draw[a] (n10) -- (n9);
  \end{tikzpicture}

  \caption{Forest}\label{fig:forest}
\end{subfigure}
\begin{subfigure}{.3\textwidth}
  \centering
  \begin{tikzpicture}
    \tikzset{n/.style={circle, fill=black, inner sep=2pt}}

    \node[n] (n0) at (0,0) {};
    \node[n] (n1) at (1,0) {};
    \node[n] (n2) at (2,0) {};
    \node[n] (n3) at (3,0) {};
      
    \draw[a] (n1) -- (n0);
    \draw[a] (n2) -- (n1);
    \draw[a] (n3) -- (n2);
    \draw[a,bend right=45] (n3) to (n1);
  \end{tikzpicture}

  \caption{Impossible chains [I1]}\label{fig:tricky1}
\end{subfigure}
\begin{subfigure}{.3\textwidth}
  \centering
  \begin{tikzpicture}
    \tikzset{n/.style={circle, fill=black, inner sep=2pt}}

    \node[n] (n0) at (0,0) {};
    \node[n] (n1) at (1,0) {};
    \node[n] (n2) at (2,0) {};
    \node[n] (n3) at (3,0) {};
      
    \draw[a] (n1) -- (n0);
    \draw[a] (n2) -- (n1);
    \draw[a] (n3) -- (n2);
    \draw[a,bend right=45] (n1) to (n3);
  \end{tikzpicture}

  \caption{Impossible chains [I2]}\label{fig:tricky2}
\end{subfigure}

   \caption{Small instances of chains and non-chains}\label{fig:block-graphs}
\end{figure}

\begin{table}
    \centering
    \begin{tabular}{llr}
      \tbh{Property} & \tbh{Block graph} & \tbh{Time} \\ \toprule
      Example: conflicting blocks & \texttt{SingleChain} & 1 min 3 sec \\
      Example: conflicting blocks & \texttt{ShortFork} & 52 sec \\
      Example: conflicting blocks & \texttt{Forest} & 2 min 21 sec \\ \midrule
      Example: fin.\ \& confl.\ blocks & \texttt{SingleChain} & 1 min 5
      sec \\
      Example: fin.\ \& confl.\ blocks & \texttt{ShortFork} & 10 hours
      49 min 47 sec \\
      Example: fin.\ \& confl.\ blocks & \texttt{Forest} & timeout
      ($>40$h) \\ \midrule
      AccountableSafety & \texttt{SingleChain} & 1 min 13 sec \\
      AccountableSafety & \texttt{ShortFork} & timeout ($>40$h) \\
      AccountableSafety & \texttt{Forest} & timeout ($>40$h) \\ \bottomrule
    \end{tabular}
    \caption{Model checking \SpecTwo{} for fixed block
    graphs.}\label{tab:spec2_fixed}
\end{table}

We can see that the solver can handle the single chain block graph (where
\textit{AccountableSafety} trivially holds due to absence of conflicting
blocks), but struggles with the more complex scenarios even when given a fixed
block graph. This suggests that the complexity inherent in the specification is
due to the high combinatorial complexity of voting scenarios, rather than just
the block graph.

\section{\SpecThree{}: Adding a State Machine}\label{sec:spec3}

In the course of writing~\SpecTwo{}, we realized
that the executable Python specification is essentially sequential. In other
words, even though the 3SF algorithm is distributed, its Python
specification as well as~\SpecOne{} and~\SpecTwo{} are encoding
all possible protocol states in a single specification state.

\subsection{State Machines in \tlap{}}

Since \tlap{} is designed for reasoning about state machines, Apalache is tuned
towards incremental model checking of the executions. For instance, if a state
machine is composed of $n$ kinds of state-transitions (called actions), that
is, $\mathit{Next} = A_1 \vee \dots \vee A_N$, the model checker tries to find
a violation to a state invariant~$\textit{Inv}$ by assuming that a single
symbolic transition $A_i$ took place. If there is no violation, that instance
of the invariant~$\textit{Inv}$ can be discarded. By doing so, the model
checker reduces the number of constraints for the SMT solver to process.  The
same applies to checking an inductive invariant. When there is no such
decomposition of $\mathit{Next}$, the model checker produces harder problems
for SMT\@.

\paragraph{Symbolic simulation.} In addition to model checking, Apalache offers
support for symbolic simulation. In this mode, the model checker
non-deterministically selects \emph{one} symbolic transition at each step and
applies it to the current state. While symbolic simulation is not exhaustive
like model checking, it allows for more efficient exploration of deep system
states. Importantly, any counterexample discovered through symbolic simulation
is just as valid as one found via exhaustive model checking.

\subsection{Introducing a State Machine}

Having the observations above in mind, we introduce~\SpecThree{} in which we
specify a state machine that incrementally builds possible protocol states by
constructing the following data structures:

\begin{itemize}

    \item The set of proposed blocks, and the graph containing these blocks,
        called $\textit{blocks}$ and $\textit{graph}$, respectively.

    \item The ancestor-descendant relation, called
        $\textit{block\_graph\_closure}$.

    \item The announced FFG votes and the validators' votes on them, called
        $\textit{ffg\_votes}$ and $\textit{votes}$, respectively.

    \item The set of justified checkpoints that is computed as the greatest
        fixpoint, called $\textit{justified\_checkpoints}$.

\end{itemize}

The most essential part of the specification is shown in
Figure~\ref{fig:abstract-ffg-cast-votes}. It represents an abstract
state-transition that corresponds to a subset of validators sending votes for
the same FFG vote. The most interesting part of this transition can be seen in
the last lines: Instead of directly computing the set of justified checkpoints
in the current state, we just ``guess'' it and impose the required constraints
on this set. Mathematically, $\textit{justified\_checkpoints}$ is the greatest
fixpoint among the checkpoints that satisfy the predicate
$\textit{IsJustified}$.

\begin{figure}
    \centering
    \includegraphics[width=\textwidth]{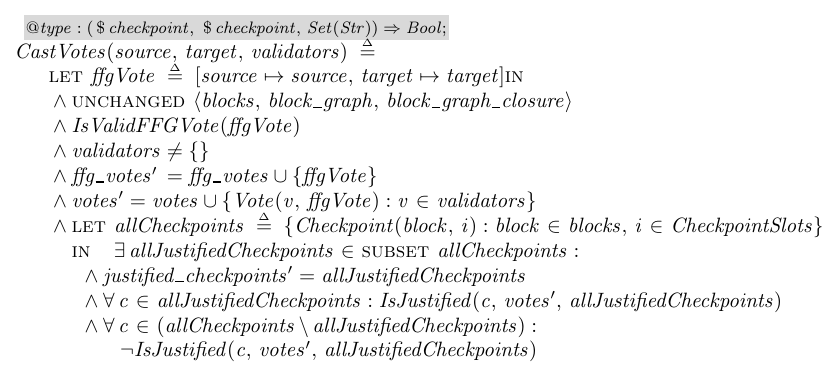}  \caption{A state-transition that casts votes}\label{fig:abstract-ffg-cast-votes}
\end{figure}

Figure~\ref{fig:abstract-ffg-next} shows the transition predicate of the
specification. It just non-deterministically chooses the inputs to
$\textit{ProposeBlock}$ or $\textit{CastVotes}$ and fires one of those two
actions.

\begin{figure}
    \centering
    \includegraphics[width=\textwidth]{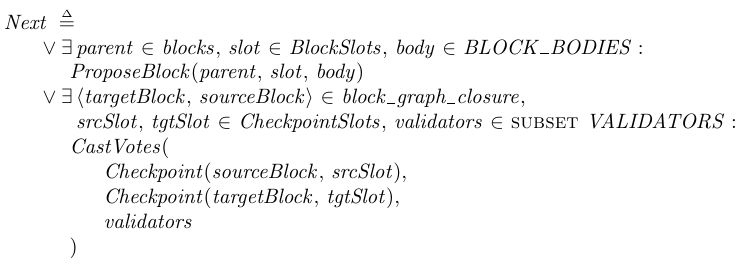}  \caption{The transition predicate of~\SpecThree{}}\label{fig:abstract-ffg-next}
\end{figure}

\subsection{Model Checking Experiments}

We have conducted model checking experiments with this specification. They are
summarized in Table~\ref{tab:abstract-ffg-mc}. Two things to notice:

\begin{itemize}

  \item The model checker finds examples of justified and conflicting blocks
    very fast.

  \item Model checking accountable safety gets stuck very fast, just after 6
    steps. Note that justified blocks are recomputed after every step that
    casts votes.

\end{itemize}

\begin{table}
    \centering
    \begin{tabular}{llrr}
        \tbh{Command}
            & \tbh{State invariant}
            & \tbh{Depth}
            & \tbh{Time}
            \\ \toprule
        \texttt{check}
            & $\textit{ExistsJustifiedNonGenesisInv}$
            & 2
            & 5s
            \\ \midrule
        \texttt{check}
            & $\textit{ExistTwoFinalizedConflictingBlocks}$
            & 6
            & 52min
            \\ \midrule
        \texttt{simulate}
            & $\textit{AccountableSafety}$
            & 6
            & 37min
            \\ \midrule
        \texttt{check}
            & $\textit{AccountableSafety}$
            & 6
            & 16h 13min
            \\ \bottomrule
    \end{tabular}
    \caption{Model checking experiments
             with~\SpecThree{}}\label{tab:abstract-ffg-mc}
\end{table}

\section{\SpecThreeB{} in SMT}\label{sec:smt}

In addition to \SpecThree{}, we have developed a manual encoding of the 3SF
protocol in SMT, directly utilizing the CVC5 solver~\cite{BarbosaBBKLMMMN22}.
In contrast to the \tlap{}-based specifications, we manually encode an
arbitrary state of the protocol
directly, using SMT-LIB constraints. At the cost of using a lower-level language and
requiring a specialized solver, this has the following advantages:
\begin{itemize}
  \item The manual encoding is more succinct than the SMT encoding produced by
    Apalache from \tlap{}.
  \item SMT-LIB supports recursive functions, which allows us to express the
    recursion inherent to the 3SF protocol more naturally.
\end{itemize}

Structurally, this specification combines some aspects of \SpecTwo{} and
\SpecThree{}: like \SpecTwo{}, it does not encode a state machine, but rather
an arbitrary state of the protocol. However, it uses data structures and
constraints similar to \SpecThree{} to model key components of the 3SF
protocol, including checkpoints, FFG votes, justification and finalization,
and slashing conditions.
To model finite sets and cardinalities, we use the non-standard SMT theory of
sets and cardinalities~\cite{DBLP:journals/lmcs/BansalBRT18} provided by CVC5.

\subsection{Modeling}
The SMT spec explicitly introduces hashes, checkpoints and nodes as atoms over
finite domains. Votes are modeled as any possible combination
of a source and target checkpoint and a sending node:

\begin{lstlisting}[language=smt]
(declare-datatype Hash ((Hash1) (Hash2) (Hash3)))
(declare-datatype Checkpoint ((C1) (C2) (C3) (C4) (C5)))
(declare-datatype Node ((Alice) (Bob) (Charlie) (David)))
(declare-datatype Vote ((Vote (source Checkpoint) (target Checkpoint) (sender Node))))
\end{lstlisting}

To remain within the decidable SMT fragment, we have to model unbounded data
using functions. For example, we model the slot number of a block as a function
from block hashes to integers:

\begin{lstlisting}[language=smt]
(declare-fun slot (Hash) Int)
; genesis' slot is 0
(assert (= (slot genesis) 0))
; slots are increasing from parent to child
(assert (forall ((h Hash)) (=> (not (= h genesis)) (> (slot h) (slot (parent_of h))))))
\end{lstlisting}

We encode all protocol rules as declarative constraints in the SMT model. For
example, the constraint that a checkpoint is justified if and only if there is
a supermajority of validators that cast a justifying vote from an already
justified checkpoint is encoded as follows:

\begin{lstlisting}[language=smt]
(declare-const justified_checkpoints (Set Checkpoint))
(assert (= justified_checkpoints (set.comprehension ((c Checkpoint))
  (or
    ; L3: genesis is justified
    (= c genesis_checkpoint)
    ; L4: there is a quorum of validators that cast a vote from a justified checkpoint to c
    (>= (* 3 (set.card (set.comprehension ((node Node))
        (exists ((vote Vote)) (and
          ; L4+5: vote is a valid vote cast by node
          (set.member vote votes)
          (= (sender vote) node)
          ; L6: the source of the vote is justified
          (set.member (source vote) justified_checkpoints)
          ; L7: there is a chain source.block ->* checkpoint.block ->* target.block
          (and
            (set.member
              (tuple (checkpoint_block (source vote)) (checkpoint_block c))
              ancestor_descendant_relationship)
            (set.member
              (tuple (checkpoint_block c) (checkpoint_block (target vote)))
              ancestor_descendant_relationship)
          )
          ; L8: the target checkpoint slot is the same as the checkpoint's
          (= (checkpoint_slot (target vote)) (checkpoint_slot c))
        ))
        node
      )))
      (* 2 N)
    )
  )
  c
)))
\end{lstlisting}

\subsection{Checking the Specification}

Similar to model checking with Apalache, we can use the CVC5 solver to find
examples of reachable protocol states, often within seconds. For example, to
find a finalized checkpoint besides the genesis checkpoint, we append the
following SMT-LIB script to our specification:

\begin{lstlisting}[language=smt]
; find a finalized checkpoint (besides genesis)
(assert (not (= finalized_checkpoints (set.singleton genesis_checkpoint))))
(check-sat)
(get-model)
\end{lstlisting}

Ultimately, we want to show that \textit{AccountableSafety} holds. To check that
this property holds with an SMT solver, we negate the property and check for
unsatisfiability:

\begin{lstlisting}[language=smt]
; there is a counterexample to AccountableSafety
(assert (and
  (exists ((block1 Hash) (block2 Hash))
    (and
      (set.member (tuple block1 block2) conflicting_blocks)
      (set.member block1 finalized_blocks)
      (set.member block2 finalized_blocks)
  ) )
  (< (* 3 (set.card slashable_nodes)) N)
))
(check-sat)
\end{lstlisting}

If the constraint above was satisfiable, we would have found a counterexample
to \textit{AccountableSafety}. Thus, \textit{AccountableSafety} holds if the
solver returns unsatisfiable.

\subsection{Experimental Results}

Similar to our experiments with Apalache, we instantiate the SMT model with
different numbers of blocks and checkpoints to verify
\textit{AccountableSafety} and evaluate the runtime of the solver.
Table~\ref{tab:smt} shows the results of our experiments.

\begin{table}
    \centering
    \begin{tabular}{rrrrrrrr}
        \tbh{blk} & \tbh{chk} & \tbh{Time} \\ \toprule
        3 & 5 & 8 min 11 sec \\
        4 & 5 & 22 min 00 sec \\
        5 & 5 & 1 h 40 min 19 sec \\ \midrule
        3 & 6 & 74 h 1 min 41 sec \\
        4 & 6 & timeout ($>80$ h) \\
        5 & 6 & timeout ($>80$ h) \\ \bottomrule
    \end{tabular}
    \caption{Model checking experiments with CVC5.\ \textbf{blk} is the number of blocks, \textbf{chk} is the number
      of checkpoints.}\label{tab:smt}
\end{table}

As we can see, similar to our \tlap{} experiments, the runtime of the SMT solver
grows exponentially with the number of blocks and checkpoints. The solver is
able to verify \textit{AccountableSafety} for up to 5 blocks and 5 checkpoints
or 3 blocks and 6 checkpoints, but it times out for larger instances.

\section{\SpecThreeC{} in Alloy}\label{sec:alloy}

Once we saw that our specifications~\SpecThree{} and~\SpecFour{}
were too challenging for the \tlap{} tools, we decided to employ
Alloy~\cite{jackson2012software,alloytools}, as an alternative to our tooling.
Alloy has two features that are attractive for our project:

\begin{itemize}

  \item Alloy allows to precisely control the search scope by setting the
      number of objects of particular type in the universe. For example, we can
      restrict the number of checkpoints and votes to 5 and 12, respectively.
      Although we introduced similar restrictions with Apalache, Alloy has even
      finer level of parameter tuning.

  \item Alloy translates the model checking problem to a Boolean satisfiability
      problem (SAT), in contrast to Apalache, which translates it to satisfiability-modulo-theory (SMT).  This
      allows us to run the latest off-the-shelf solvers such as Kissat, the
      winner of the SAT Competition 2024~\cite{SAT-Competition-2024-solvers}.

\end{itemize}

\subsection{From \tlap{} to Alloy}

Having~\SpecThree{} in~\tlap{}, it was relatively easy for us to write down
the Alloy specification, as~\SpecThree{} is already quite abstract. However, as
Alloy's core abstractions are objects and relations between them, the
specification looks quite differently. For example, here is how we declare
signatures for the blocks and checkpoints:

\begin{lstlisting}[language=alloy,columns=fullflexible]
  sig Payload {}
  sig Signature {}
  fact atLeastFourSignatures { #Signature >= 4 }

  sig Block {
    slot: Int,
    body: Payload,
    parent: Block
  }
  sig Checkpoint {
    block: Block,
    slot: Int
  }

  one sig GenesisPayload extends Payload {}
  one sig GenesisBlock extends Block {}
\end{lstlisting}

Unlike~\SpecThree{} but similar to~\SpecThreeB{}, our Alloy specification does
not describe a state machine, but it specifies an arbitrary single state of the
protocol. As in~\SpecThree{}, we compute the set of justified checkpoints as a
fixpoint:

\begin{lstlisting}[language=alloy,columns=fullflexible]
  one sig JustifiedCheckpoints {
    justified: set Checkpoint
  }
  fact justifiedCheckpointsAreJustified {
    all c: JustifiedCheckpoints.justified |
      c.slot = 0 or 3.mul[#justifyingVotes[c].validator] >= 2.mul[#Signature]
  }
\end{lstlisting}

\subsection{Model Checking with Alloy}

Similar to our experiments with Apalache, we produce examples of configurations
that satisfy simple properties. For example:

\begin{lstlisting}[language=alloy,columns=fullflexible]
  run someFinalizedCheckpoint { some c: Checkpoint |
    isFinalized[c] and c.slot != 0
  }
  for 10 but 6 Block, 6 Checkpoint, 12 Vote, 5 int
\end{lstlisting}

For small search scopes, Alloy quickly finds examples, often in a matter
of seconds.

Ultimately, we are interested in showing that~$\textit{AccountableSafety}$
holds true. To this end, we want Alloy to show that the
formula~$\textit{noAccountableSafety}$ does not have a model (unsatisfiable).
Similar to Apalache, Alloy allows us to show safety for bounded scopes.

\begin{lstlisting}[language=alloy,columns=fullflexible]
  pred noAccountableSafety {
    disagreement and (3.mul[#slashableNodes] < #Signature)
  }
  run noAccountableSafety for 10 but 6 Block, 6 Checkpoint,
                                     4 Signature, 6 FfgVote, 15 Vote, 5 int
\end{lstlisting}

Thus it is interesting to increase these parameter values as much as possible
to cover more scenarios, while keeping them small enough that the SAT solver
still terminates within reasonable time. Table~\ref{tab:alloy-mc} summarizes
our experiments with Alloy when checking $\textit{noAccountableSafety}$ with
Kissat.  Since a vote includes an FFG vote and a validator's signature, the
number of FFG votes and validators bound the number of votes. Hence, if we only
increase the number of votes without increasing the number of FFG votes, we may
miss the cases when longer chains of justifications must be constructed.

\subsection{Model Checking Results}\label{sec:alloy-results}

As we can see, we have managed to show \textit{AccountableSafety} for chain
configurations of up to 6 blocks, including the genesis block. While the
configurations up to 4 blocks are important to analyze, they do not capture the
general case, namely, of two finalized checkpoints building on top of justified
checkpoints that are not the genesis. Thus, configurations of 5 or more blocks
are the most important ones, as they capture the general case.

To see if we could go a bit further beyond 5 blocks, we have tried a
configuration with 7 blocks. This configuration happened to be quite challenging
for Alloy and Kissat. We have run Kissat for over 16 days, and it was stuck at
the remaining seven percent for many days. At that point, it had introduced:

\begin{itemize}
  \item 60 thousand Boolean variables,
  \item 582 thousand ``irredundant'' clauses and 199 thousand binary clauses,
  \item 1.3 billion conflicts,
  \item 56 million restarts.
\end{itemize}

We conjecture that the inductive structure of justified and finalized
checkpoints make it challenging for the solvers (both SAT and SMT) to analyze
the specification. Essentially, the solvers have to reason about chains of
checkpoints on top of chains of blocks.

\begin{table}
    \centering
    \begin{tabular}{rrrrrrrr}
        \tbh{\#}
            & \tbh{blk}
            & \tbh{chk}
            & \tbh{sig}
            & \tbh{ffg}
            & \tbh{votes}
            & \tbh{Time}
            & \tbh{Memory}
            \\ \toprule
        1 & 3 & 5 & 4 & 5  & 12 & 4s & 35M
            \\
        2 & 4 & 5 & 4 & 5 & 12 & 10s & 40M
            \\
        3 & 5 & 5 & 4 & 5 & 12 & 15s & 45M
            \\
        4 & 3 & 6 & 4 & 6 & 15 & 57s & 52M
            \\
        5 & 4 & 6 & 4 & 6 & 15 & 167s & 55M
            \\
        6 & 5 & 6 & 4 & 6 & 15 & 245s & 57M
            \\
        7 & 6 & 6 & 4 & 6 & 15 & 360s & 82M
            \\
        8 & 5 & 7 & 4 & 6 & 24 & 1h 27m & 156M
            \\
        9 & 5 & 10 & 4 & 8 & 24 & over 8 days (timeout) & 198 MB
            \\
        9 & 5 & 10 & 4 & 8 & 32 & over 8 days (timeout) & 220 MB
            \\
        10 & 3 & 15 & 4 & 5 & 12 & 31s & 56M
            \\
        11 & 4 & 20 & 4 & 5 & 12 & 152s & 94M
            \\
        12 & 5 & 25 & 4 & 5 & 12 & 234s & 117M
            \\
        13 & 7 & 15 & 4 & 10 & 40 & over 16 days (timeout) & 300 MB
            \\ \bottomrule
    \end{tabular}
    \caption{Model checking experiments with Alloy and Kissat:
      \textbf{blk} is the number of blocks, \textbf{chk} is the number
      of checkpoints, \textbf{sig} is the number of validator signatures,
      \textbf{ffg} is the number of FFG votes, \textbf{votes} is the number
      of validator votes.
    }\label{tab:alloy-mc}
\end{table}

\section{\SpecFour{}: Two Chains in \tlap{}}\label{sec:spec4}

We obtain this specification by preserving the vote and checkpoint behavior from \SpecThree{}, but we restrict the block-graph to exactly two chains, which are allowed to fork, i.e., they always share a common prefix.
To this end, we restrict block bodies of the two chains as follows:
\begin{enumerate}
	\item Block bodies on the first chain have non-negative sequential integer numbers starting from 0 (genesis), e.g. $0, 1, 2,3, 4$.
	\item Block bodies on the second chain are the same as the ones on the first chain \emph{up to the fork point}, after which they change sign, but otherwise maintain their sequence in absolute value terms, e.g. $0, 1,2,-3,-4$.
\end{enumerate}

The aim of this encoding is to decrease complexity for the SMT solver, since it simplifies block graph ancestry and closure reasoning significantly (e.g. we can check whether one block is an ancestor of another by simply comparing block body integer values), while preserving the behavior which arises from voting on checkpoint justification.

\subsection{Inductive Invariant}\label{sec:spec4-indinv}

This specification defines an inductive invariant \textit{IndInv}. Recall, an inductive invariant satisfies the following two conditions, assuming we have an initial-state predicate \textit{Init}, and transition predicate \textit{Next}:
\begin{enumerate}
	\item It is implied by the initial state: $\mathit{Init} \Rightarrow \mathit{IndInv}$
	\item It is preserved by the transition predicate: $\mathit{IndInv} \land \mathit{Next} \Rightarrow \mathit{IndInv}$
\end{enumerate}
Because of this characterization, inductive invariants lend themselves especially nicely to bounded symbolic model checking; with Apalache, one can prove (or disprove) an inductive invariant by running two queries of depth at most 1, corresponding to the above properties.
If no violation is found, we are assured that \textit{IndInv} holds in all possible reachable states.
Since our goal is ultimately to prove or disprove \textit{AccountableSafety}, we can additionally prove $\mathit{IndInv} \Rightarrow \mathit{AccountableSafety}$.

The challenge, typically, is that inductive invariants are more difficult to write, compared to state invariants.
They are usually composed of several lemmas, i.e. properties that we are less interested in on their own, but which are crucial in establishing property (2.).

As we explain below, the inductive invariant introduced in \SpecFour{} mainly consists of two sets of lemmas, one for characterizing the chain-fork scenarios, and a second one for characterizing justified checkpoints, shown in Figure~\ref{figFork} and Figure~\ref{figJC} respectively.

\begin{figure}
  \includegraphics[width=\textwidth]{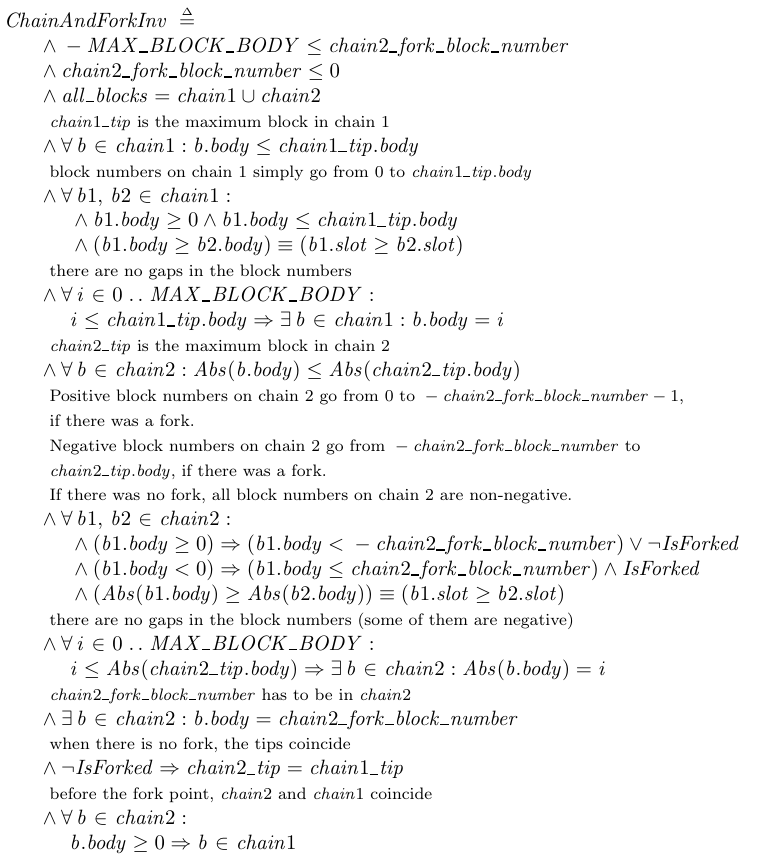}
  \caption{Chain and forking lemmas in \textit{IndInv}}\label{figFork}
\end{figure}

\begin{figure}
  \includegraphics[width=\textwidth]{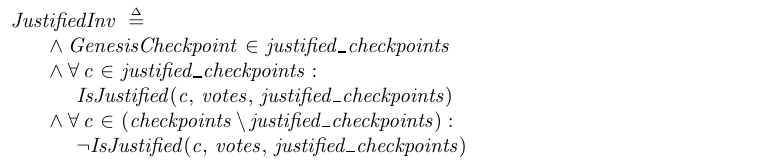}
  \caption{Justified checkpoint lemmas in \textit{IndInv}}\label{figJC}
\end{figure}

Since we represent a fork by means of a sign-change on block numbers, we have to specify that their absolute values are contiguous, that the chains coincide in the absence of a fork, as well as on the pre-fork prefix, and that both chains are monotone w.r.t. block numbers after the fork point.
Additionally, we require validity predicates for the vote set and checkpoints, as well as a precise characterization of the set of justified checkpoints. The latter merits further discussion.

\paragraph{Justified checkpoints.} To accurately describe the set of all justified checkpoints, we require two constraints:
\begin{enumerate}
	\item Consistency: Every checkpoint in the set is justified, and 
	\item Completeness: Every justified checkpoint belongs to the set.
\end{enumerate}
It is worth noting that both of these properties are required for an inductive invariant; if we do not specify consistency, the solver can trivially infer that the set contains all possible checkpoints, and that all checkpoints are finalized, which leads to bogus counterexamples to \textit{AccountableSafety}.
On the other hand, if we don't specify completeness, the solver can trivially infer that the set of justified checkpoints is empty (or contains exactly the genesis checkpoint). This leads us to be unable to detect real violations of \textit{AccountableSafety}, since we can never observe conflicting finalized checkpoints, even when the votes cast necessitate their existence.

This is critical, because including both constraints places a heavy burden on the solver. No matter how we define the justification predicate, it will inevitably appear in both positive and negative form, forcing the solver to contend with both quantifier alternations and double universal quantification, both of which are known to be hard.
Fundamentally, this demonstrates the intrinsic complexity of the problem itself, regardless of the particular characterization of justified sets in \tlap{}.

\subsection{Model Checking Experiments}

Table~\ref{tab:spec4-experiments} summarizes our experiments on checking
inductiveness and accountable safety of~\SpecFourB{}. Interestingly, checking
inductiveness of~\texttt{IndInv} takes seconds. However,
checking~\texttt{AccountableSafety} against the inductive invariant times out.

\begin{table}
    \centering
    \begin{tabular}{lllrrr}
            \tbh{Init}
            & \tbh{Step}
            & \tbh{Invariant}
            & \tbh{Depth}
            & \tbh{Memory}
            & \tbh{Time}
            \\ \toprule
            \texttt{Init}
            & \texttt{Next}
            & \texttt{IndInv}
            & 0
            & 0.6 GB
            & 2sec
            \\
            \texttt{IndInit}
            & \texttt{Next}
            & \texttt{IndInv}
            & 1
            & 0.6 GB
            & 2sec
            \\
            \texttt{IndInit}
            & \texttt{Next}
            & \texttt{AccountableSafety}
            & 0
            & 1.5 GB
            & TO ($> 6$d)
            \\
            \bottomrule
    \end{tabular}
    \caption{Model checking experiments
        with~\SpecFour{} on the instance~\texttt{MC\_ffg\_b3\_ffg5\_v12}
        (``TO'' means timeout)
    }\label{tab:spec4-experiments}
\end{table}

\section{\SpecFourB{}: Decomposition \& Abstractions}\label{sec:spec4b}

Disappointed with the results in Section~\ref{sec:spec4}, we decided to
push abstractions even further. These abstractions helped us to verify
accountable safety for the configuration in Figure~\ref{fig:three}.
Unfortunately, they do not scale to larger configurations. In any case, we
find these abstractions quite important for further research on model checking
of algorithms similar to 3SF\@.

\subsection{Constraints over Set Cardinalities}

\SpecThree{} contains several comparisons over set cardinalities. For example:

\begin{equation}
    3 * \textit{Cardinality}(\textit{validatorsWhoCastJustifyingVote}) \ge 2 \cdot N
    \label{eq:card-comparison}
\end{equation}

In the general case, the symbolic model checker has to encode constraints for
the cardinality computation in Equation~(\ref{eq:card-comparison}). If a set
$S$ contains up to $n$~elements, Apalache produces $O(n^2)$~constraints for
$\textit{Cardinality(S)}$.

To partially remediate the above issue in~\SpecFour{}, we introduce a
constant~$T$ for the upper bound on the number of faulty process, and further
refine Equation~(\ref{eq:card-comparison}) to:

\begin{equation}
    \textit{Cardinality}(\textit{validatorsWhoCastJustifyingVote}) \ge 2 \cdot T + 1
    \label{eq:card-comparison2}
\end{equation}

Apalache applies an optimized translation rule for
Equation~(\ref{eq:card-comparison2}). Essentially, the solver has to find $2
\cdot T + 1$ set elements to show that Equation~(\ref{eq:card-comparison2})
holds true. This gives us a linear number of constraints, instead of a
quadratic one. For example, when $T=1$ and the set may contain up to~$n$
elements, the model checker produces $O(n)$ constraints to check
Equation~(\ref{eq:card-comparison2}). A similar optimization is used in the
specification of Tendermint~\cite{TendermintSpec2020}.

\subsection{Quorum Sets}

To further optimize the constraints over set cardinalities, we have applied the
well-known pattern of replacing cardinality tests with quorum sets. For
example, this approach is used in the specification of Paxos~\cite{tla-paxos}.
To this end, we introduce quorum sets such as in Figure~\ref{fig:quorum-sets}.

\begin{figure}[!h]
    \includegraphics[width=\textwidth]{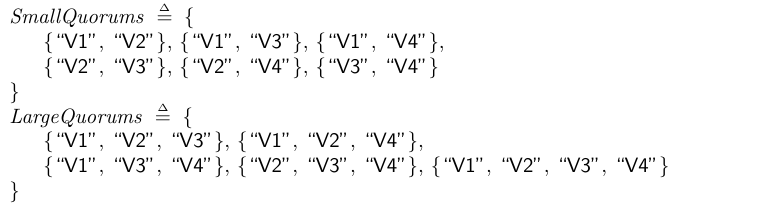}
    \caption{Quorum sets}\label{fig:quorum-sets}
\end{figure}

By using quorum sets, we further replace cardinality comparisons like in
Equation~(\ref{eq:card-comparison2}) with membership tests like in
Equation~(\ref{eq:card-comparison3}):

\begin{equation}
    \textit{validatorsWhoCastJustifyingVote} \in \textit{LargeQuorums}
    \label{eq:card-comparison3}
\end{equation}

\subsection{Decomposition of Chain Configurations}\label{sec:decomposition}

Recall Figure~\ref{figFork} from Section~\ref{sec:spec4-indinv}, which poses
constraints on the chains and the fork points. While these constraints should
be easier for an SMT solver than general reachability properties, they
still produce a number of arithmetic constraints. On the other hand, when we do
model checking for small parameters, there is a relatively small set of
possible chain configurations.  Figure~\ref{fig:block-graphs} shows some of
these configurations for the graphs of 3 to 7 blocks.

This observation led us to the following idea. Instead of using the constraints
over blocks such as in Figure~\ref{figFork}, we introduce one instance per
chain configuration. It is thus sufficient to verify accountable safety for all
of these instances and aggregate the model checking results.

Figure~\ref{fig:indinit-c3} shows an initialization predicate for the
configuration shown in Figure~\ref{fig:five2}. This predicate replaces the
general initialization predicate that we discussed in
Section~\ref{sec:spec4-indinv}.

\begin{figure}
    \includegraphics[width=\textwidth]{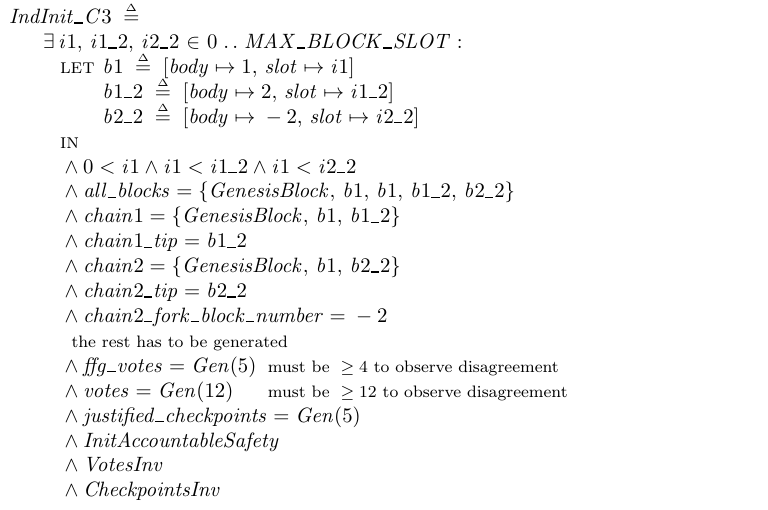}
    \caption{Initialization predicate for the
             configuration~\text{M5b}}\label{fig:indinit-c3}
\end{figure}

\subsection{Model Checking Experiments}

Tables~\ref{tab:spec4b-experiments} and~\ref{tab:spec4b-inductiveness}
summarize our experiments with Apalache for various configurations. One
interesting effect of the optimizations, especially of the ones presented in
Section~\ref{sec:decomposition}, is a significant drop in the memory
consumption of the SMT solver. In our experiments, Z3 required from 700~MB to
1.5~GB\@. While this is still a factor of 20 in comparison to the Alloy
experiments in Section~\ref{sec:alloy}, this is significantly better than our
initial experiments with~\SpecTwo{} and~\SpecThree{}, which required up to
20~GB of RAM\@. Interestingly, inductiveness checks in
Table~\ref{tab:spec4b-inductiveness} take significantly longer than in the case
of~\SpecFour{}. We conjecture that this is caused by the need to check more
specialized graphs in conjunction with steps.

We observe that the run-times depend on the different
configurations used. This is probably due to the effect of built-in variance of runs of the Z3
SMT solver, which is well-known in the
community of computer-aided verification. To further confirm these variations,
we could run multiple experiments with~\texttt{hyperfine}.

\begin{table}
    \centering
    \begin{tabular}{lllrr}
        \tbh{Configuration}
            & \tbh{Instance}
            & \tbh{Init}
            & \tbh{Memory}
            & \tbh{Time}
            \\ \toprule
        M3: Fig.~\ref{fig:three}
            & \texttt{MC\_ffg\_b1\_ffg5\_v12}
            & \texttt{Init\_C1}
            & 1.2 GB
            & 11h 31min
            \\
        M4a: Fig.~\ref{fig:four-top}
            & \texttt{MC\_ffg\_b3\_ffg5\_v12}
            & \texttt{Init\_C4}
            & 1.4 GB
            & 6d 15h
            \\
        M4b: Fig.~\ref{fig:four-bottom}
            & \texttt{MC\_ffg\_b3\_ffg5\_v12}
            & \texttt{Init\_C2}
            & 1.3 GB
            & 1d 6h
            \\
        M5a: Fig.~\ref{fig:five1}
            & \texttt{MC\_ffg\_b3\_ffg5\_v12}
            & \texttt{Init\_C3}
            & 1.2 GB
            & 1h 53min
            \\
        M5b: Fig.~\ref{fig:five2}
            & \texttt{MC\_ffg\_b3\_ffg5\_v12}
            & \texttt{Init\_C1}
            & 1.5 GB
            & 22d
            \\
            \bottomrule
    \end{tabular}
    \caption{Checking accountable safety against~\SpecFourB{}}\label{tab:spec4b-experiments}
\end{table}

\begin{table}
    \centering
    \begin{tabular}{lllrr}
        \tbh{Instance}
            & \tbh{Init}
            & \tbh{Invariant}
            & \tbh{Memory}
            & \tbh{Time}
            \\ \toprule
        \texttt{MC\_ffg\_b1\_ffg5\_v12}
            & \texttt{Init}
            & \texttt{IndInv}
            & 580 MB
            & 7s
            \\
        \texttt{MC\_ffg\_b3\_ffg5\_v12}
            & \texttt{Init}
            & \texttt{IndInv}
            & 700 MB
            & 7s
            \\
        \texttt{MC\_ffg\_b1\_ffg5\_v12}
            & \texttt{Init\_C1}
            & \texttt{IndInv}
            & 1.4 GB
            & 2min 8s
            \\
        \texttt{MC\_ffg\_b3\_ffg5\_v12}
            & \texttt{Init\_C1}
            & \texttt{IndInv}
            & 1.8 GB
            & 19min 10s
            \\
        \texttt{MC\_ffg\_b3\_ffg5\_v12}
            & \texttt{Init\_C2}
            & \texttt{IndInv}
            & 1.6 GB
            & 13min 16s
            \\
        \texttt{MC\_ffg\_b3\_ffg5\_v12}
            & \texttt{Init\_C3}
            & \texttt{IndInv}
            &  1.6 GB
            & 17min 39s
            \\
        \texttt{MC\_ffg\_b3\_ffg5\_v12}
            & \texttt{Init\_C4}
            & \texttt{IndInv}
            & 1.6 GB
            & 16min 23s
            \\
            \bottomrule
    \end{tabular}
    \caption{Checking inductiveness
             for~\SpecFourB{}}\label{tab:spec4b-inductiveness}
\end{table}

\section{Summary \& Future Work}\label{sec:summary}
We have presented a series of specifications modeling the 3SF protocol from
various perspectives. Initially, we developed a direct translation of the
protocol's Python specification into \tlap{}, but this approach proved
unsatisfactory due to the reliance on recursion. To address this, we modified
the specification to use folds in place of recursion, theoretically enabling
model-checking with Apalache. However, this approach also proved impractical
due to the high computational complexity involved. Subsequently, we applied a
series of abstractions to improve the model-checking efficiency.

In addition to the \tlap{} specifications, we also introduced an SMT encoding
and an Alloy specification. The SMT encoding proved to be fairly performant,
while the Alloy specification demonstrated exceptional performance in
combination with the Kissat SAT solver.

\subsection{Key Outcomes of the Project}

To revisit the key outcomes of the project, see Section~\ref{sec:outcomes}.

\subsection{Potential Extensions of this Project}\label{sec:future}

\begin{enumerate}

  \item \emph{Prove some of the other properties guaranteed by 3SF.} This project focused on verifying AccountableSafety, arguably the most critical property that the 3SF protocol must satisfy.
  However, it is also arguably the most challenging to verify through model checking, as it is the only property involving all honest participants that must hold true under fully asynchronous network conditions.
  In contrast, proving that honest nodes never commit slashable offenses (a non-distributed system property dependent only on the behavior of a single node) or properties reliant on network synchrony, such as reorg-resilience and dynamic-availability, is expected to be simpler, albeit this would require extending the \tlap{} encoding to include the behavior of honest nodes, which was unnecessary for verifying AccountableSafety.

  \item \emph{Generating inputs to the Python specification.} As we have noted,
    the power of our~\tlap{} specifications is the ability to generate examples
    with Apalache. This would help the authors of the Python specification to
    produce tests for their specifications.

  \item \emph{Specifications of a refined protocol.} The current version of the
    Python specification is very abstract. On one hand, it is usually
    beneficial to specify a high-level abstraction. On the other hand, as we
    found, the current level of abstraction is quite close to the general
    inductive definitions of justified and finalized checkpoints. We expect a
    refined protocol specification to be more amenable to model checking.

  \item \emph{Transferring the Alloy encoding to Apalache.} As we have found in
    this project, Alloy offers richer options for fine tuning in terms of the
    search scope. In combination with steady advances in SAT solving, adapting
    the Alloy encoding to Apalache would improve model checking performance,
    ultimately leading to a faster feedback loop and faster specification
    development.

\end{enumerate}

\subsection*{Acknowledgements}

We are grateful to Luca Zanolini and Francesco D'Amato for reviewing the earlier
versions of our report and discussing the 3SF Protocol with us.

\pagebreak
\bibliographystyle{plain}
\bibliography{ref}

\pagebreak

\appendix

The work done in this section is the main contribution of Milestone~3. Since
this section is quite long and technical, we have decided to add it in the
appendix.

\section{Translating Python Specifications to \tlap{}}\label{section3}

In this section, we present our results on translating a subset of Python that
is used to write executable specifications in the projects such as
\texttt{ssf-mc}\footnote{\url{https://github.com/saltiniroberto/ssf}}. Since we
have done the translation by hand, our rules are currently formalized on paper.
Additionally, in case of non-trivial rules, we give correctness proofs.

Since the Python subset uses the package $\texttt{pyrsistent}$, we assume that
the expressions are typed according to the package types, which can be found in
the $\texttt{typing}$ module. In the following, given a Pyrsistent type $\tau$,
we will denote its corresponding type in the Apalache type
system~1.2\footnote{\url{https://apalache-mc.org/docs/adr/002adr-types.html}}
with $\htau$. Table~\ref{tab:types} shows the types mapping.

\begin{table}[!h]
    \centering
    \begin{tabular}{cc}
        \tbh{Pyrsistent type} & \tbh{Apalache type}
            \\\toprule
        bool & Bool \\\midrule
        int & Int \\\midrule
        str & Str \\\midrule
        $\PSet[\tau]$ & $\Set(\htau)$ \\\midrule
        $\PVec[\tau]$ & $\List(\htau) \defeq \{ es\colon \Seq(\htau) \}$ \\\midrule
        $\PMap[\tau_1, \tau_2]$ & $\htau_1 \rightarrow \htau_2$ \\\midrule
        $\Callable[[\tau_1], \tau_2]$ & $\tau_1 \Rightarrow \tau_2$ \\\midrule
    \end{tabular}
    \caption{Mapping the Pyrsistent
             types to Apalache types}\label{tab:types}
\end{table}

Note that instead of using the standard type $\Seq(\htau)$ of \tlap{}, which
represents 1-indexed sequences, we use an alternative module
\texttt{Lists}\footnote{\url{https://github.com/konnov/tlaki/blob/main/src/Lists.tla}},
which represents 0-indexed sequences. To that end, we introduce the type
notation:

\[ \List(\htau) \coloneqq \{ es\colon \Seq(\htau) \} \]

The translation rules can be easily adapted to~$\Seq(\htau)$ instead
of~$\List(\htau)$.

\subsection{Translation Rules}

We give one translation rule per expression.

\subsubsection{Singleton vector}

\begin{mathpar}
    \inferrule*[right=(\textsc{Vec})]
    {
        \inferrule{a\colon \tau}{e \colon \htau}
        \\
        \mathrm{pvector\_of\_one\_element}(a)
    }{
        \List(\tup{e})\colon \List(\htau)
    }
\end{mathpar}

A singleton Python vector is translated to a single-element list, and annotated
as such. See
\href{https://github.com/saltiniroberto/ssf/blob/7ea6e18093d9da3154b4e396dd435549f687e6b9/high_level/common/pythonic_code_generic.py#L15-L16}{Source}.

\subsubsection{Vector concatenation}

\begin{mathpar}
    \inferrule*[right=(\textsc{Concat})]
    {
        \inferrule{a\colon \PVec[\tau]}{e \colon \List(\htau)}
        \\
        \inferrule{b\colon \PVec[\tau]}{f \colon \List(\htau)}
        \\
        \mathrm{pvector\_concat}(a, b)
    }{
        \Concat(e,f) \colon \List(\htau)
    }
\end{mathpar}

Vector concatenation is translated to the list concatenation. See
\href{https://github.com/saltiniroberto/ssf/blob/7ea6e18093d9da3154b4e396dd435549f687e6b9/high_level/common/pythonic_code_generic.py#L19-L20}{Source}.

\subsubsection{Set sequentialization}

\begin{mathpar}
    \inferrule*[right=(\textsc{SetToVec})]
    {
        \inferrule{a\colon \PSet[\tau]}{e \colon \Set(\htau)}
        \\
        s \coloneqq \tup{}\colon \Seq(\htau)
        \\
        \mathrm{from\_set\_to\_pvector}(a)
    }{
        \ApaFoldSet( \Push, \List(s), e ) \colon \List(\htau)
    }
\end{mathpar}

We use fold, to create a sequence (in some order) from the set.  See
\href{https://github.com/saltiniroberto/ssf/blob/7ea6e18093d9da3154b4e396dd435549f687e6b9/high_level/common/pythonic_code_generic.py#L23-L24}{Source}.

\subsubsection{ Empty set}

\begin{mathpar}
    \inferrule*[right=(\textsc{EmptySet})]
    {
        \mathrm{pset\_get\_empty}() \colon \PSet[t]
    }{
        \{\} \colon \Set(\htau)
    }
\end{mathpar}

The only relevant part here is that we need a type annotation on the Python
side to correctly annotate the empty set in \tlap{}.  See
\href{https://github.com/saltiniroberto/ssf/blob/7ea6e18093d9da3154b4e396dd435549f687e6b9/high_level/common/pythonic_code_generic.py#L27-L28}{Source}.

\subsubsection{ Set union}

\begin{mathpar}
    \inferrule*[right=(\textsc{Union})]
    {
        \inferrule{a\colon \PSet[\tau]}{e \colon \Set(\htau)}
        \\
        \inferrule{b\colon \PSet[\tau]}{f \colon \Set(\htau)}
        \\
        \mathrm{pset\_merge}(a, b)
    }{
        e \cup f \colon \Set(\htau)
    }
\end{mathpar}

Set union is translated to the \tlap{} native set union.  See
\href{https://github.com/saltiniroberto/ssf/blob/7ea6e18093d9da3154b4e396dd435549f687e6b9/high_level/common/pythonic_code_generic.py#L31-L32}{Source}.

\subsubsection{ Set flattening}

\begin{mathpar}
    \inferrule*[right=(\textsc{BigUnion})]
    {
        \inferrule{ a\colon \PSet[\PSet[\tau]]}{e \colon \Set(\Set(\htau))}
        \\
        \mathrm{pset\_merge\_flatten}(a)
    }{
        \UNION e \colon \Set(\htau)
    }
\end{mathpar}

Set flattening is translated to the \tlap{} native big $\UNION$.  See
\href{https://github.com/saltiniroberto/ssf/blob/7ea6e18093d9da3154b4e396dd435549f687e6b9/high_level/common/pythonic_code_generic.py#L35-L36}{Source}.

\subsubsection{Set intersection}

\begin{mathpar}
    \inferrule*[right=(\textsc{Intersection})]
    {
        \inferrule{a\colon \PSet[\tau]}{e \colon \Set(\htau)}
        \\
        \inferrule{b\colon \PSet[\tau]}{f \colon \Set(\htau)}
        \\
        \mathrm{pset\_intersection}(a, b)
    }{
        e \cap f \colon \Set(\htau)
    }
\end{mathpar}

Set intersection is translated to the \tlap{} native set intersection.  See
\href{https://github.com/saltiniroberto/ssf/blob/7ea6e18093d9da3154b4e396dd435549f687e6b9/high_level/common/pythonic_code_generic.py#L42-L43}{Source}.

\subsubsection{Set difference}

\begin{mathpar}
    \inferrule*[right=(\textsc{SetDiff})]
    {
        \inferrule{a\colon \PSet[\tau]}{e \colon \Set(\htau)}
        \\
        \inferrule{b\colon \PSet[\tau]}{f \colon \Set(\htau)}
        \\
        \mathrm{pset\_difference}(a, b)
    }{
        e \setminus f \colon \Set(\htau)
    }
\end{mathpar}

Set difference is translated to the \tlap{} native set difference.
\href{https://github.com/saltiniroberto/ssf/blob/7ea6e18093d9da3154b4e396dd435549f687e6b9/high_level/common/pythonic_code_generic.py#L46-L47}{Source}.

\subsubsection{Singleton set}

\begin{mathpar}
    \inferrule*[right=(\textsc{Singleton})]
    {
        \inferrule{a\colon \tau}{e \colon \htau}
        \\
        \mathrm{pset\_get\_singleton}(a)
    }{
        \{e\} \colon \Set(\htau)
    }
\end{mathpar}

A singleton Python set is translated to a \tlap{} native single-element set.
See
\href{https://github.com/saltiniroberto/ssf/blob/7ea6e18093d9da3154b4e396dd435549f687e6b9/high_level/common/pythonic_code_generic.py#L50-L51}{Source}.

\subsubsection{Set extension}

\begin{mathpar}
    \inferrule*[right=(\textsc{SetExt})]
    {
        \inferrule{a\colon \PSet[\tau]}{e \colon \Set(\htau)}
        \\
        \inferrule{b\colon \tau}{f \colon \htau}
        \\
        \mathrm{pset\_add}(a, b)
    }{
        e \cup \{f\} \colon \Set(\htau)
    }
\end{mathpar}
A set extension is translated to a combination of union and singleton-set construction. Semantically, this is the equivalence:

\[
\mathrm{pset\_add}(a,b) = \mathrm{pset\_merge}(a, \mathrm{pset\_get\_singleton}(b))
\]

See
\href{https://github.com/saltiniroberto/ssf/blob/7ea6e18093d9da3154b4e396dd435549f687e6b9/high_level/common/pythonic_code_generic.py#L54-L55}{Source}.

\subsubsection{Element choice}

\begin{mathpar}
    \inferrule*[right=(\textsc{Choice})]
    {
        \inferrule{a\colon \PSet[\tau]}{e \colon \Set(\htau)}
        \\
        \mathrm{pset\_pick\_element}(a)
    }{
        (\CHOOSE x \in e\colon\TRUE) \colon\htau
    }
\end{mathpar}

We translate this to the built in deterministic choice in \tlap{}. We cannot
account for the dynamic non-emptiness requirement. Instead, in that scenario,
the value of this expression is some unspecified element of the correct type.
See
\href{https://github.com/saltiniroberto/ssf/blob/7ea6e18093d9da3154b4e396dd435549f687e6b9/high_level/common/pythonic_code_generic.py#L58-L60}{Source.}

\subsubsection{Set filter}

\begin{mathpar}
    \inferrule*[right=(\textsc{Filter})]
    {
        \inferrule{a\colon \Callable[[\tau], \bool]}{e \colon \htau \to \Bool}
        \\
        \inferrule{b\colon \PSet[\tau]}{f \colon \Set(\htau)}
        \\
        \mathrm{pset\_filter}(a, b)
    }{
        \{ x \in f \colon e[x] \} \colon \Set(\htau)
    }
\end{mathpar}

Set filtering is translated to the \tlap{} native filter operation.  See
\href{https://github.com/saltiniroberto/ssf/blob/7ea6e18093d9da3154b4e396dd435549f687e6b9/high_level/common/pythonic_code_generic.py#L63-L70}{Source}.

\subsubsection{ Set maximum}

\begin{mathpar}
    \inferrule*[right=(\textsc{Max})]
    {
        \inferrule{a\colon \PSet[\tau]}{e \colon \Set(\htau)}
        \\
        \inferrule{b\colon \Callable[[\tau], T]}{f \colon \htau \to \hat{T}}       
        \\
        \mathrm{pset\_max}(a, b)
    }{
        (\CHOOSE m \in e\colon \forall x \in e\colon \Le(f[x], f[m]))\colon \htau
    }
\end{mathpar}

See
\href{https://github.com/saltiniroberto/ssf/blob/7ea6e18093d9da3154b4e396dd435549f687e6b9/high_level/common/pythonic_code_generic.py#L74-L76}{Source}.

Here, the translation depends on the type $T$ (resp. type $\hat{T}$), since there is no built-in notion of ordering in \tlap{}. 
\paragraph{Instance 1:} If $\hat{T}$ is an integer type, then 
\begin{lstlisting}[language=tla,columns=fullflexible]
Le(x,y) $\defeq$ x $\le$ y
\end{lstlisting}
\paragraph{Instance 2:} If $\hat{T}$ is a tuple type $\tup{\Int,\Int}$, it is instead 
\begin{lstlisting}[language=tla,columns=fullflexible]
Le(x,y) $\defeq$ 
  IF x[1] > y[1]
  THEN FALSE
  ELSE IF x[1] < y[1]
       THEN TRUE
       ELSE x[2] $\le$ y[2]
\end{lstlisting}

\subsubsection{ Set sum}

\begin{mathpar}
    \inferrule*[right=(\textsc{Sum})]
    {
        \inferrule{a\colon \PSet[\pyint]}{e \colon \Set(\Int)}
        \\
        \mathrm{pset\_sum}(a)
    }{
        \ApaFoldSet(+, 0, e) \colon \Int
    }
\end{mathpar}

We translate a set sum as a fold of the $+$ operator over the set. See
\href{https://github.com/saltiniroberto/ssf/blob/7ea6e18093d9da3154b4e396dd435549f687e6b9/high_level/common/pythonic_code_generic.py#L79-L80}{Source}.

\subsubsection{ Set emptiness check}

\begin{mathpar}
    \inferrule*[right=(\textsc{IsEmpty})]
    {
        \inferrule{a\colon \PSet[\tau]}{e \colon \Set(\htau)}
        \\
        \mathrm{pset\_is\_empty}(a)
    }{
        e = \{\} \colon \Bool
    }
\end{mathpar}

The emptiness check is translated to a comparison with the explicitly
constructed empty set. See
\href{https://github.com/saltiniroberto/ssf/blob/7ea6e18093d9da3154b4e396dd435549f687e6b9/high_level/common/pythonic_code_generic.py#L83-L84}{Source}.

\subsubsection{Vector-to-Set conversion}

\begin{mathpar}
    \inferrule*[right=(\textsc{VecToSet})]
    {
        \inferrule{a\colon \PVec[\tau]}{e \colon \List(\htau)}
        \\
        \mathrm{from\_pvector\_to\_pset}(a)
    }{
        \{ \At(e, i)\colon i \in \Indices(e) \} \colon \Set(\htau)  
    }
\end{mathpar}

We translate the set-conversion, by mapping the accessor method over
$\Indices$. See
\href{https://github.com/saltiniroberto/ssf/blob/7ea6e18093d9da3154b4e396dd435549f687e6b9/high_level/common/pythonic_code_generic.py#L87-L88}{Source}.

\subsubsection{Set mapping}

\begin{mathpar}
    \inferrule*[right=(\textsc{Map})]
    {
        \inferrule{a\colon \Callable[[\tau_1], \tau_2]}{e \colon \htau_1 \to \htau_2}
        \\
        \inferrule{b\colon \PSet[\tau_1]}{f \colon \Set(\htau_1)}
        \\
        \mathrm{pset\_map}(a, b)
    }{
        \{ e[x]\colon x \in f\} \colon \Set(\htau_2)
    }
\end{mathpar}

Set mapping is translated to the \tlap{} native map operation. See
\href{https://github.com/saltiniroberto/ssf/blob/7ea6e18093d9da3154b4e396dd435549f687e6b9/high_level/common/pythonic_code_generic.py#L91-L97}{Source}.

\subsubsection{Function domain inclusion check}

\begin{mathpar}
    \inferrule*[right=(\textsc{InDom})]
    {
        \inferrule{a\colon \PMap[\tau_1, \tau_2]}{f \colon \htau_1 \to \htau_2}
        \\
        \inferrule{b\colon \tau_1}{e \colon \htau_1}
        \\
        \mathrm{pmap\_has}(a, b)
    }{
        e \in \DOMAIN f\colon \Bool
    }
\end{mathpar}

Function domain inclusion checking is translated to the \tlap{} native
set-inclusion operation for $\DOMAIN$.  See
\href{https://github.com/saltiniroberto/ssf/blob/7ea6e18093d9da3154b4e396dd435549f687e6b9/high_level/common/pythonic_code_generic.py#L100-L101}{Source}.

\subsubsection{ Function application}

\begin{mathpar}
    \inferrule*[right=(\textsc{App})]
    {
        \inferrule{a\colon \PMap[\tau_1, \tau_2]}{f \colon \htau_1 \to \htau_2}
        \\
        \inferrule{b\colon \tau_1}{e \colon \htau_1}
        \\
        \mathrm{pmap\_get}(a, b)
    }{
        f[e]\colon \htau_2
    }
\end{mathpar}

Function application is translated to the \tlap{} native function application.
We cannot account for the dynamic domain-membership requirement. Instead, in
that scenario, the value of this expression is some unspecified element of the
correct type. See
\href{https://github.com/saltiniroberto/ssf/blob/7ea6e18093d9da3154b4e396dd435549f687e6b9/high_level/common/pythonic_code_generic.py#L104-L106}{Source}.

\subsubsection{ Empty function}

\begin{mathpar}
    \inferrule*[right=(\textsc{EmptyFun})]
    {
        \mathrm{pmap\_get\_empty}()\colon \PMap[\tau_1,\tau_2]
        \\
        s \coloneqq \{\}\colon \Set(\tup{\htau_1, \htau_2}) 
    }{
      	\SetAsFun(s)\colon \htau_1 \to \htau_2
    }
\end{mathpar}

See
\href{https://github.com/saltiniroberto/ssf/blob/7ea6e18093d9da3154b4e396dd435549f687e6b9/high_level/common/pythonic_code_generic.py#L109-L110}{Source}.

We use Apalache's $\SetAsFun$, since we only need to annotate the empty set
with the correct tuple type. The native construction via $[ \_ \mapsto \_]$
would require us to invent a codomain value, which we might not have access to
if $\tau_1 \ne \tau_2$.

\subsubsection{ Function update}

\begin{mathpar}
    \inferrule*[right=(\textsc{Update})]
    {
        \inferrule{a\colon \PMap[\tau_1, \tau_2]}{f \colon \htau_1 \to \htau_2}
        \\
        \inferrule{b\colon \tau_1}{x \colon \htau_1}
        \\
        \inferrule{c\colon \tau_2}{y \colon \htau_2}
        \\
        \mathrm{pmap\_set}(a, b, c)
    }{
        [ v \in (\DOMAIN f \cup \{x\}) \mapsto \IF v = x \THEN y \ELSE f[x] ] \colon \htau_1 \to \htau_2
    }
\end{mathpar}

See
\href{https://github.com/saltiniroberto/ssf/blob/7ea6e18093d9da3154b4e396dd435549f687e6b9/high_level/common/pythonic_code_generic.py#L113-L114}{Source}.

While one might intuitively want to translate map updates using the \tlap{}
native $\EXCEPT$, we cannot, since $\EXCEPT$
\href{https://lamport.azurewebsites.net/tla/book-21-07-04.pdf}{does not allow
for domain extensions}, whereas $\mathrm{pmap\_set}$
\href{https://pyrsistent.readthedocs.io/en/latest/api.html#pyrsistent.PMap.set}{does}.
By definition:

\[
[f \EXCEPT\ ![x] = y] \defeq [v \in \DOMAIN f \mapsto \IF v = x \THEN y \ELSE f[x] ]
\]

where, most notably, the domain of $[f \EXCEPT ![x] = y]$ is exactly the domain of $f$. We adapt the above definition to (possibly) extend the domain.

\subsubsection{Function combination}

\begin{mathpar}
    \inferrule*[right=(\textsc{FnMerge})]
    {
        \inferrule{a\colon \PMap[\tau_1, \tau_2]}{f \colon \htau_1 \to \htau_2}
        \\
        \inferrule{b\colon \PMap[\tau_1, \tau_2]}{g \colon \htau_1 \to \htau_2}
        \\
        \mathrm{pmap\_merge}(a,b)
    }{
        [x \in (\DOMAIN f \cup \DOMAIN g) \mapsto \IF x \in \DOMAIN g \THEN g[x] \ELSE f[x]]\colon \htau_1\to\htau_2
    }
\end{mathpar}

See~\href{https://github.com/saltiniroberto/ssf/blob/7ea6e18093d9da3154b4e396dd435549f687e6b9/high_level/common/pythonic_code_generic.py#L117-L118}{Source}.
Function combination is translated to a new function, defined over the union of
both domains. Note that the second map/function dominates in the case of
key/domain collisions.

\subsubsection{Function domain}

\begin{mathpar}
    \inferrule*[right=(\textsc{FnDomain})]
    {
        \inferrule{a\colon \PMap[\tau_1, \tau_2]}{e \colon \htau_1 \to \htau_2}
        \\
        \mathrm{pmap\_keys}(a)
    }{
        \DOMAIN e\colon \Set(\htau_1)
    }
\end{mathpar}

See
\href{https://github.com/saltiniroberto/ssf/blob/7ea6e18093d9da3154b4e396dd435549f687e6b9/high_level/common/pythonic_code_generic.py#L121-L122}{Source}.
We translate this to the \tlap{} native $\DOMAIN$.

\subsubsection{ Function codomain}

\begin{mathpar}
    \inferrule*[right=(\textsc{FnCodomain})]
    {
        \inferrule{a\colon \PMap[\tau_1, \tau_2]}{e \colon \htau_1 \to \htau_2}
        \\
        \mathrm{pmap\_values}(a)
    }{
        \{ e[x]\colon x \in \DOMAIN e \}\colon \Set(\htau_2)
    }
\end{mathpar}

See~\href{https://github.com/saltiniroberto/ssf/blob/7ea6e18093d9da3154b4e396dd435549f687e6b9/high_level/common/pythonic_code_generic.py#L125-L126}{Source}.
We translate this by mapping the function over its $\DOMAIN$.
 
\subsection{Meta-Rules for Translating Recursive Definitions}
\label{subsec:recrules}

So far, we have been dealing with the primitive building blocks. In order to
facilitate translation of recursive Python code to the \tlap{} fragment
supported by Apalache, we introduce a set of \tlap{}-to-\tlap{} rules, which
allow us to:

\begin{enumerate}
  \item Formulate translations from Python to \tlap{} in the
    intuitive way, potentially introducing constructs like recursion, and then

  \item Pair them with a \tlap{}-to-\tlap{} rule, ending in a supported
      fragment.

\end{enumerate}

We have to do that, as the \texttt{ssf} code extensively uses recursive
definitions.

\subsubsection{Bounded recursion rule}

Assume we are given a $\RECURSIVE$ operator $\op$. Without loss of generality,
we can take the arity to be $1$, since any operator of higher arity can be
expressed as an arity $1$ operator over tuples or records.

Further, we assume that the operator $\op$ has the following shape:

\begin{lstlisting}[language=tla,columns=fullflexible]
RECURSIVE R(_)
\* $@$type (a) => b;
R(x) ==
  IF P(x)
  THEN e
  ELSE G(x, R(next(x))
\end{lstlisting}

In other words, we have:

\begin{itemize}
  \item A termination condition $P$
  \item A "default" value $e$, returned if the argument satisfies the termination condition
  \item A general case operator $G$, which invokes a recursive computation of $R$ over a modified parameter given by the operator $\bb$.
\end{itemize}

The following needs to hold true, to ensure recursion termination: For every
$x\colon a$, there exists a finite sequence $x = v_1, \dots, v_n$
that satisfies the following conditions:

\begin{itemize}
\item $P(v_n)$ holds
\item $v_{i+1} = \bb(v_i)$ for all $1 \le i < n$
\item $P(v_i)$ does not hold for any $1 \le i < n$ (i.o.w., this is the shortest sequence with the above two properties)
\end{itemize}

We will attempt to express the recursive operator $\op$ with a non-recursive
``iterative'' operator $\nrop$ of arity $2$, which takes an additional
parameter: a constant $N$. The non-recursive operator will have the property
that, for any particular choice of $x$, $\nrop(x, N)$ will evaluate to $\op(x)$
if $n < N$ (i.e., if the recursion stack of $\op$ has height of at most $N$).

To that end, we first define:
\begin{lstlisting}[language=tla,columns=fullflexible]
\* $@$type (a, Int) => Seq(a);
Stack(x, N) ==
  LET 
    \* $@$type: (Seq(a), Int) => Seq(a);
    step(seq, i) ==
      IF i > Len(seq) \/ P(seq[1])
      THEN seq
      ELSE 
        \* Alternatively, we can append here and reverse the list at the end
        <<next(seq[1])>> \o seq 
  IN ApaFoldSeqLeft( step, <<x>>, MkSeq(N, LAMBDA i: i) )
\end{lstlisting}

Here, $\ApaFoldSeqLeft$ is the left-fold over sequences, that is, an operator
for which:

\begin{align*}
\ApaFoldSeqLeft(O, v, \tup{}) =& v \\
\ApaFoldSeqLeft(O, v, \tup{x_1,\dots, x_n}) =& O(O(O(v, x_1), x_2), \dots, x_n)
\end{align*}
and $\MkSeq$ is the sequence constructor, for which 
\[
\MkSeq(N, O) = \tup{O(1), \dots, O(N)}
\]

We can see that $\Chain(x,N)$ returns the sequence $\tup{v_n, ..., x}$ if $N$
is sufficiently large.  We can verify whether or not that is the case, by
evaluating $P(\Chain(x, N)[1])$. If it does not hold, the $N$ chosen is not
large enough, and needs to be increased. Using $\Chain$ we can define a
fold-based non-recursive operator $\op^*$, such that $\op^*(x) = \op(x)$ under
the above assumptions:

\begin{lstlisting}[language=tla,columns=fullflexible]
\* $@$type (a, Int) => b;
I(x, N) ==
  LET stack == Stack(x, N) IN
  LET step(cumul, v) == G(v, cumul) IN
  ApaFoldSeqLeft( step, e, Tail(stack) )
\end{lstlisting}
Then, $\op^*(x) = \nrop(x, N_0)$ for some sufficiently large specification-level constant $N_0$. Alternatively,
\begin{lstlisting}[language=tla,columns=fullflexible]
\* $@$type (a, Int) => b;
I(x, N) ==
  LET stack == Stack(x, N) IN
  LET step(cumul, v) == G(v, cumul) IN
  IF P(stack[1])
  THEN ApaFoldSeqLeft(step, e, Tail(stack))
  ELSE CHOOSE x \in {}: TRUE 
\end{lstlisting}
In this form, we return $\CHOOSE x \in {}: \TRUE$, which is an idiom meaning "any value" (of the correct type), in the case where the $N$ chosen was not large enough. Tools can use this idiom to detect that $\nrop(x,N)$ did not evaluate to the expected value of $\op(x)$. 

\paragraph{Example.} Consider the following operator:
\begin{lstlisting}[language=tla,columns=fullflexible]
RECURSIVE R(_)
\* $@$type (Int) => Int;
R(x) ==
  IF x <= 0
  THEN 0
  ELSE x + R(x-1)
\end{lstlisting}
where $P(x) = x \le 0$, $G(a,b) = a + b$, and $\bb(x) = x - 1$. For this operator, we know that $\op(4) = 10$. By the above definitions:
\begin{lstlisting}[language=tla,columns=fullflexible]
\* $@$type (Int, Int) => Seq(Int);
Stack(x, N) ==
  LET 
    \* $@$type: (Seq(Int), Int) => Seq(Int);
    step(seq, i) ==
      IF i > Len(seq) \/ seq[1] <= 0
      THEN seq
      ELSE <<seq[1] - 1>> \o seq
  IN ApaFoldSeqLeft( step, <<x>>, MkSeq(N, LAMBDA i: i) )
\end{lstlisting}
We can compute the above $\Chain$ with two different constants $N$, $2$ and $100$, and observe that $\Chain(4, 2) = \tup{2, 3, 4}$ and $\Chain(4, 100) = \tup{0, 1, 2, 3, 4}$. 
We are able to tell whether we have chosen sufficiently large values for $N$ after the fact, by evaluating $P(\Chain(x,N)[1])$. 
For our $P(x) = x \le 0$, we see $\neg P(\Chain(4, 2)[1])$, and $P(\Chain(4, 100)[1])$, so we can conclude that we should not pick $N=2$, but $N=100$ suffices. 
Of course it is relatively easy to see, in this toy example, that the recursion depth is exactly $4$, but we could use this post-evaluation in cases where the recursion depth is harder to evaluate from the specification, to determine whether we need to increase the value of $N$. \hfill $\triangleleft$

\noindent Continuing with the next operator:
\begin{lstlisting}[language=tla,columns=fullflexible]
\* $@$type (Int, Int) => Int;
I(x, N) ==
  LET stack == Stack(x, N)
  IN ApaFoldSeqLeft( +, 0, Tail(stack) )
\end{lstlisting}
we see that $\nrop(4, 2) = 7 \ne 10 = \op(4)$, but $\nrop(4, 100) = 10 = \op(4)$.
As expected, choosing an insufficiently large value of $N$ will give us an incorrect result, but we know how to detect whether we have chosen an appropriate $N$.

\paragraph{Optimization for associative $G$.}
In the special case where $G$ is associative, that is, $G(a, G(b, c)) = G(G(a, b), c)$ for all $a,b,c$, we can make the entire translation more optimized, and single-pass. Since $\nrop(x,N)$, for sufficiently large $N$, computes 
\[
G(v_1, G(v_2, ... (G(v_{n-2}, G(v_{n-1}, e)))))
\]
and $G$ is associative by assumption, then computing
\[
G(G(G(G(v_1, v_2), ...), v_{n-1}), e)
\]
gives us the same value. This computation can be done in a single pass:
\begin{lstlisting}[language=tla,columns=fullflexible]
IForAssociative(x, N) ==
  IF P(x)
  THEN e
  ELSE
    LET 
      \* $@$type: (<<a, a>>, Int) => <<a, a>>;
      step(pair, i) == \* we don't use the index `i`
        LET partial == pair[1]
             currentElem == pair[2]
        IN
          IF P(currentElem)
          THEN pair
          ELSE
            LET nextElem == next(currentElem)
            IN << G(partial, IF P(nextElem) e ELSE nextElem), nextElem >>
    IN ApaFoldSeqLeft( step, <<x, x>>, MkSeq(N, LAMBDA i: i) )[1]
\end{lstlisting}

\subsubsection{ Mutual recursion cycles}

Assume we are given a collection of $n$ operators $\op_1, \dots, \op_n$ (using the convention $\op_{n+1} = \op_1$), with types $\op_i\colon (a_i) \Rightarrow a_{i+1}$ s.t. $a_{n+1} = a_1$, in the following pattern:

\begin{lstlisting}[language=tla,columns=fullflexible]
RECURSIVE R_i(_)
\* $@$type (a_i) => a_{i+1};
R_i(x) == G_i(x, R_{i+1}(next_i(x)))
\end{lstlisting}
Then, we can inline any one of these operators, w.l.o.g. $\op_1$, s.t. we obtain a primitive-recursive operator:
\begin{lstlisting}[language=tla,columns=fullflexible]
RECURSIVE R(_) 
\* $@$type: (a_1) => a_1;
R(x) ==
  G_1( x, 
    G_2( next_1(x),
      G_3( next_2(next_1(x)),
        G_4( ...
          G_n( next_{n-1}(next_{n-2}(...(next_1(x)))), 
               R(next_n(next_{n-1}(...(next_1(x)))))
            )
          )
        )
      )
    )
\end{lstlisting}
for which $\op(x) = \op_1(x)$ for all $x$, and $\op$ terminates iff $\op_1$ terminates.

\subsubsection{One-to-many recursion}

Suppose we are given, for each value $x: a$, a finite set $V(x)\colon \Set(a)$ s.t. $V(x)$ is exactly the set of values $v$, for which we are required to recursively compute $\op(v)$, in order to compute $\op(x)$. 
Further, assume that there exists a potential function $\gamma$ from $a$ to nonnegative integers, with the property that, for any $x$ of type $a$ the following holds:
\[
\forall y \in V(x)\colon \gamma(y) < \gamma(x) 
\]
If one cannot think of a more intuitive candidate for $\gamma$, one may always take $\gamma(t)$ to be the recursion stack-depth required to compute $\op(t)$ (assuming termination). It is easy to see that such a definition satisfies the above condition.

\paragraph{Example.} In the 3SF example, for each checkpoint $x$, the set $V(x)$ would be the set of all source-checkpoints belonging to FFG votes, which could be used to justify $x$ (and those checkpoints need to be recursively justified, forming a chain all the way back to genesis). Additionally, $\gamma$ would assign each checkpoint $x$ the value $x.\text{slot}$. \hfill $\triangleleft$

\noindent Let $\op$ have the following shape:
\begin{lstlisting}[language=tla,columns=fullflexible]
RECURSIVE R(_)
\* $@$type (a) => b;
R(x) ==
  IF P(x)
  THEN e
  ELSE G(x, F(V(x), Op))
\end{lstlisting}
where $F(S, T(\_)) \coloneqq \{s \in S\colon T(s)\}$ or $F(S, T(\_)) \coloneqq \{T(s)\colon s \in S\}$ (i.e. a map or a filter).

\paragraph{Example.} In the 3SF example, for $\mathrm{is\_justified\_checkpoint}$, $P$ checks whether $x$ is the genesis checkpoint, and $e = \TRUE$. $G$ is the main computation, which determines whether or not a non-genesis checkpoint is justified, by finding quorums of validators, wherein each validator cast an FFG vote justifying $x$, but where the source checkpoint was recursively justified (i.e. belonged to $V(x)$). \hfill $\triangleleft$
\\
\\
We give a translation scheme, which reduces this more generalized form of recursion to the one given in the previous section, by defining a map-based recursive operator $\mop$, for which we will ensure
\[
\op(x) = \mop([ v \in \{x\} \mapsto V(x) ])[x]
\] 
if $\op(x)$ terminates. We define the necessary operators in Figure \ref{fig1}. Using these operator definitions, we can show the following theorem:
\begin{figure}[ht]
\begin{lstlisting}[language=tla,columns=fullflexible]
\* $@$type: (a -> Set(a)) => a -> Set(a);
next(map) ==
  LET newDomain == UNION {map[v]: v \in DOMAIN map}
  IN [ newDomainElem \in newDomain |-> V(newDomainElem) ]

\* $@$type: (a -> Set(a), a -> b) => a -> b;
Gm(currentRecursionStepMap, partialValueMap) ==
  LET domainExtension == DOMAIN currentRecursionStepMap IN
  LET 
    \* $@$type: (a) => b;
    evalOneKey(k) ==
      LET OpSubstitute(x) == partialValueMap[x] 
      IN G(k, F(currentRecursionStepMap[k], OpSubstitute))
  IN [
    x \in (domainExtension \union DOMAIN partialValueMap) |->
      IF x \in DOMAIN partialValueMap
      THEN partialValueMap[x]
      ELSE IF P(x)
           THEN e
           ELSE evalOneKey(x)
  ]

RECURSIVE Rm(_)
\* $@$type (a -> Set(a)) => a -> b;
Rm(map) ==
  IF \A x \in DOMAIN map: P(x)
  THEN [ x \in DOMAIN map |-> e ]
  ELSE Gm(map, Rm(next(map)))
\end{lstlisting}
\caption{$\mop$ and auxiliary operators \label{fig1}}
\end{figure}

\newcommand{\thmBody}{
Let $f$ be a function, s.t. for any $x \in \DOMAIN f$ it is the case that $f[x] = V(x)$. Then, for
$g \coloneqq \mop(f)$:
\[
\forall x \in \DOMAIN g \colon g[x] = \op(x)
\]
}
 
\begin{theorem}\label{thm}
\thmBody
\end{theorem}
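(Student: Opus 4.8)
The plan is to prove the statement by well-founded induction on a measure built from the potential function~$\gamma$. For a map~$m$ put $\mu(m) \coloneqq \max\{\gamma(x) : x \in \DOMAIN m\}$ if $\DOMAIN m \neq \emptyset$, and $\mu(m) \coloneqq -1$ otherwise; since $\gamma$ ranges over~$\nat$, the values of~$\mu$ are bounded below, so induction on~$\mu$ is legitimate. The crucial monotonicity fact is that $\mu(\bb(m)) < \mu(m)$ whenever some $v \in \DOMAIN m$ has $\neg P(v)$: by definition $\DOMAIN\bb(m) = \bigcup_{v \in \DOMAIN m} V(v)$, so either this union is empty and $\mu(\bb(m)) = -1 < \mu(m)$ (using $\DOMAIN m \neq \emptyset$), or any $y \in \DOMAIN\bb(m)$ lies in some $V(w)$ with $w \in \DOMAIN m$ and hence $\gamma(y) < \gamma(w) \le \mu(m)$. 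This same fact shows the recursion of~$\mop$ terminates, so $g \coloneqq \mop(f)$ is well defined; it also shows that every value $\op(x)$ referenced below is well defined, as the elements occurring all lie in the $V$-closure of~$\DOMAIN f$.

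Two preliminary observations make the induction go through. First, $\bb$ preserves the hypothesis of the theorem: by its definition $\bb(m)[y] = V(y)$ for all $y \in \DOMAIN\bb(m)$, so if $f[x] = V(x)$ on $\DOMAIN f$ then $\bb(f)$ again satisfies the hypothesis, and the induction hypothesis may be applied to it. Second, $\DOMAIN\mop(m) \supseteq \DOMAIN m$ for every~$m$ (immediate from both branches of~$\mop$), and in the recursive branch $\DOMAIN\mop(m) = \DOMAIN m \cup \DOMAIN\mop(\bb(m))$.

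The induction then splits on the outer conditional of~$\mop$. \emph{Base case:} if $P(x)$ holds for all $x \in \DOMAIN f$, then $g = [x \in \DOMAIN f \mapsto e]$, and the defining equation of~$\op$ gives $\op(x) = e = g[x]$ for each such~$x$. \emph{Inductive step:} pick $v \in \DOMAIN f$ with $\neg P(v)$; then $\mu(\bb(f)) < \mu(f)$, and writing $g' \coloneqq \mop(\bb(f))$ the induction hypothesis yields $g'[y] = \op(y)$ for all $y \in \DOMAIN g'$. Here $g = \mapg(f, g')$ has domain $\DOMAIN f \cup \DOMAIN g'$, and I would check, for $x$ in this domain, the three branches of~$\mapg$: (i)~$x \in \DOMAIN g'$, where $g[x] = g'[x] = \op(x)$; (ii)~$x \notin \DOMAIN g'$ and $P(x)$, where $g[x] = e = \op(x)$; and (iii)~$x \notin \DOMAIN g'$ and $\neg P(x)$, where necessarily $x \in \DOMAIN f$ so that $f[x] = V(x)$ and $g[x] = G\bigl(x, F(V(x), \lambda y.\, g'[y])\bigr)$. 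The key point in case~(iii) is that $V(x) \subseteq \DOMAIN\bb(f) \subseteq \DOMAIN g'$ because $x \in \DOMAIN f$, hence $g'$ and $\op$ agree pointwise on $V(x)$; and since $F(S, T)$ is either $\{s \in S : T(s)\}$ or $\{T(s) : s \in S\}$, it depends on~$T$ only through its values on $S = V(x)$, so $F(V(x), \lambda y.\, g'[y]) = F(V(x), \op)$ and therefore $g[x] = G(x, F(V(x), \op)) = \op(x)$ by the defining equation of~$\op$.

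The part I expect to require the most care is not any single deduction but the bookkeeping caused by $\bb$ computing $V(v)$ for \emph{every} domain element, even those satisfying~$P$: the domain of~$\mop$ genuinely grows along the recursion, so one must confirm that (a)~$\mu$ still strictly decreases, (b)~each newly added element is nonetheless assigned exactly the value~$\op$ gives it, and (c)~termination is not broken when $\bb(f)$ has empty domain while some element of~$\DOMAIN f$ fails~$P$ --- the last handled because $\mop$ on the empty-domain map is just the base case. Pinning down the containments $V(x) \subseteq \DOMAIN g'$ and the case split inside~$\mapg$ precisely is where the proof has to be carried out most carefully.
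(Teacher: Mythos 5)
Your proposal is correct and follows essentially the same route as the paper: well-founded induction on the maximum of $\gamma$ over the domain (the paper's $\alpha$, your $\mu$), the strict decrease of this measure under $\bb$ (the paper's Lemma~\ref{lemma2}), the domain-monotonicity observation (the paper's Lemma~\ref{lemma1}), and the identical three-way case split on $\mapg$, closing case~(iii) by noting that $F$ depends on its second argument only through its values on $V(x)$. The only differences are cosmetic --- $-1$ versus $-\infty$ for the empty domain, and organizing the induction around the outer conditional of $\mop$ rather than treating the empty domain as a separate base case.
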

From this theorem, the following corollary trivially follows:
\newcommand{\corollaryBody}{
For any $x\colon a$
\[
\op(x) = \mop([ v \in \{x\} \mapsto V(x) ])[x]
\]
}
 
\begin{corollary}\label{corollary}
\corollaryBody
\end{corollary}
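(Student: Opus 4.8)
The plan is to prove Theorem~\ref{thm} by well-founded induction on a potential measure of the argument map, closely mirroring the termination argument for $\mop$ itself. For a map $h$ with nonempty domain set $M(h) \coloneqq \max\{\gamma(x) : x \in \DOMAIN h\}$, and treat the empty-domain case as a minimal base. First I would record two structural facts that do not depend on correctness: (i)~$\mop$ is well defined, i.e. the recursion in $\mop$ terminates; and (ii)~$\DOMAIN \mop(h) \supseteq \DOMAIN h$ for every admissible $h$. Both follow by unfolding $\bb$, $\mapg$ and $\mop$: the domain of $\mapg(c,p)$ always contains $\DOMAIN c$, and if some element of $\DOMAIN h$ fails $P$, then since every $y \in V(v)$ has $\gamma(y) < \gamma(v) \le M(h)$ by the assumed potential property, the domain of $\bb(h)$ — namely $\bigcup_{v \in \DOMAIN h} V(v)$ — either is empty (so the recursive call returns at once) or has strictly smaller measure. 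Hence $M$ is a legitimate induction measure and $\op$ itself is well defined on all of $a$.

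For the main claim I would prove a strengthened invariant by induction on $M(h)$: for every map $h$ with $h[x] = V(x)$ on $\DOMAIN h$, writing $g' \coloneqq \mop(h)$, we have $\DOMAIN g' \supseteq \DOMAIN h$ and $g'[x] = \op(x)$ for all $x \in \DOMAIN g'$. The base cases are (a)~$\DOMAIN h = \emptyset$, where $g'$ is the empty function; and (b)~$\forall x \in \DOMAIN h\colon P(x)$, where $g' = [x \in \DOMAIN h \mapsto e]$ and $\op(x) = e$ by one unfolding of $\op$. For the inductive step, assume some element of $\DOMAIN h$ fails $P$, and set $p \coloneqq \mop(\bb(h))$. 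Since $\bb(h)[y] = V(y)$ on its domain and $\bb(h)$ has smaller measure (or empty domain), the induction hypothesis gives $\DOMAIN p \supseteq \DOMAIN \bb(h) = \bigcup_{v \in \DOMAIN h} V(v)$ and $p[y] = \op(y)$ for all $y \in \DOMAIN p$. Now unfold $g' = \mapg(h,p)$ and fix $x \in \DOMAIN g' = \DOMAIN h \cup \DOMAIN p$. If $x \in \DOMAIN p$ then $g'[x] = p[x] = \op(x)$; otherwise $x \in \DOMAIN h \setminus \DOMAIN p$, and if $P(x)$ then $g'[x] = e = \op(x)$, while if $\neg P(x)$ then $g'[x] = G(x, F(h[x], \lambda z.\,p[z])) = G(x, F(V(x), \lambda z.\,p[z]))$. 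Here I would invoke the one property of $F$ we need: being either the filter $\{s \in S : T(s)\}$ or the image $\{T(s) : s \in S\}$, $F(S,T)$ depends only on $S$ and on the values of $T$ on $S$; since $V(x) \subseteq \bigcup_v V(v) \subseteq \DOMAIN p$ and $p$ agrees with $\op$ there, we get $F(V(x), \lambda z.\,p[z]) = F(V(x), \op)$, hence $g'[x] = G(x, F(V(x), \op)) = \op(x)$ by one unfolding of $\op$. This closes the induction, and Corollary~\ref{corollary} is obtained by instantiating $h \coloneqq [v \in \{x\} \mapsto V(x)]$ and reading off the value at $x$.

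The hard part will be the interaction between \tlap{}'s treatment of functions as total-on-their-domain objects and the substitution of the ``partial'' map $p$ for the recursive operator $\op$ inside $F$: one must guarantee that every evaluation of $p$ that actually influences the result falls within $\DOMAIN p$, which is exactly why the invariant has to carry the domain inclusion $\DOMAIN p \supseteq \bigcup_v V(v)$ rather than merely pointwise agreement. A secondary subtlety is that successive layers may overlap (an element can lie in $\DOMAIN h$ and in $\DOMAIN p$, or reappear across several $\bb$-iterates); $\mapg$ resolves this by letting the already-computed $p$ take precedence, and the induction hypothesis makes this consistent, so no extra argument is needed but it should be stated explicitly. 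Finally I would make sure the potential hypothesis is used in precisely the one place it is needed — the strict decrease of $M$ under $\bb$ — and note that it is its universal form (holding at $P$-nodes too) that lets $\mop$ terminate even though $\bb$ applies $V$ to $P$-nodes.
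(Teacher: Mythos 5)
Your proposal is correct and follows essentially the same route as the paper: your measure $M(h)$ is the paper's $\alpha$, the strict decrease under $\bb$ is Lemma~\ref{lemma2}, the domain inclusion you fold into the strengthened induction invariant is Lemma~\ref{lemma1}, and your case analysis on $\mapg$ and the observation that $F$ depends only on the values of its functional argument on $V(x) \subseteq \DOMAIN p$ match the paper's proof of Theorem~\ref{thm} step for step. The corollary itself is then, as in the paper, just the instantiation $h \coloneqq [v \in \{x\} \mapsto V(x)]$ together with $x \in \DOMAIN h \subseteq \DOMAIN \mop(h)$.
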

The equivalence proofs are available in the appendix. 
The termination proof must still be made on a case-by-case basis, as it depends on $h$ and $V$.

\section{Detailed Proofs}\label{proofs}

In the following, we show soundness of our translation rules for the recursive
operators.

\subsection{Additional Definitions}
Let $f$ be any \tlap function. We use the shorthand $D_f \coloneqq \DOMAIN f$.
We use $\nat$ to refer to the set of all natural numbers (i.e. nonnegative integers).

We assume the following precondition: There exists a $\gamma\colon a \to \nat$, with the following property:
\[
\forall x\colon a \ .\ \forall y \in V(x) \ .\ \gamma(y) < \gamma(x) 
\]

Recall the following definitions:
\[
\iteDef{\op(x)}{e}{P(x)}{G(x, F(V(x), \op))},
\]
\[
\iteDef{\mop(f)}{[ x \in D_f \mapsto e ]}{\forall x \in D_f \ .\ P(x)}{\mapg(f, \mop(\bb(f))}
\]
and
\[
\mapg(f, m)[x] \coloneqq \left\{
\begin{array}{ll}
      m[x] &; x \in D_m \\
      e &; x \notin D_m \land P(x) \\
      G(x, F(f[x], m)) &; \text{otherwise}\\
\end{array} 
\right. 
\]
with $D_{\mapg(f,m)} = D_f \cup D_m$.

\subsection{Proofs}

\begin{lemma}\label{lemma1}
Let $f$ be any function of type $a \to Set(a)$. Then
\[
D_f \subseteq D_{\mop(f)}
\]
\end{lemma}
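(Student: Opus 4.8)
The plan is to prove the inclusion directly, by unfolding the defining equation of $\mop$ one step and splitting on its guard; no induction is needed for the domain inclusion itself, only a minor appeal to well-definedness of $\mop$.

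Concretely, assuming $\mop(f)$ is defined, the recursive equation for $\mop$ applies and I distinguish the two cases of its guard. If $\forall x \in D_f\colon P(x)$, then $\mop(f) = [\,x \in D_f \mapsto e\,]$, so $D_{\mop(f)} = D_f$ and $D_f \subseteq D_{\mop(f)}$ holds trivially (indeed with equality). Otherwise $\mop(f) = \mapg(f, \mop(\bb(f)))$, and by the stated identity $D_{\mapg(f,m)} = D_f \cup D_m$ — instantiated with $m \coloneqq \mop(\bb(f))$ — we get $D_{\mop(f)} = D_f \cup D_{\mop(\bb(f))} \supseteq D_f$. In both cases $D_f \subseteq D_{\mop(f)}$, which is the claim.

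The only point that needs care is well-definedness of $\mop(f)$: since $\mop$ is a \texttt{RECURSIVE} operator, the equation above is available only once the recursion is known to terminate on $f$. Under the standing precondition $\forall x\colon a.\ \forall y \in V(x).\ \gamma(y) < \gamma(x)$ this follows, because after the first application of $\bb$ the argument's domain consists of elements drawn from sets $V(\cdot)$, whose $\gamma$-values are strictly smaller, so $\max\{\gamma(x) : x \in D_{\bb(f)}\}$ strictly decreases along the recursion until the base case $\forall x \in D_f\colon P(x)$ is reached. If one prefers to avoid invoking termination as a black box, the same case analysis can be carried out as an induction on this measure, using the induction hypothesis on $\bb(f)$ in the second case. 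Either way, I expect this termination bookkeeping — not the inclusion step, which is pure unfolding — to be the only mild obstacle, and it is shared with the remaining lemmas of this section.
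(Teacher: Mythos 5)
Your proof is correct and follows essentially the same route as the paper's: a case split on the guard of $\mop$, with equality $D_f = D_{\mop(f)}$ in the base case and the identity $D_{\mapg(f,m)} = D_f \cup D_m$ in the recursive case. The additional remarks on termination are sound but not part of the paper's argument, which treats well-definedness of $\mop$ as given under the standing precondition on $\gamma$.
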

\begin{proof}

Let $g \coloneqq \mop(f)$. To prove the first part, we have two options to
consider, based on the definition of $\op$. If $\forall x \in D_f \ .\ P(x)$,
then $D_f = D_g$. Otherwise, $g = \mapg(f,\mop(\bb(f)))$. By definition,
$D_{\mapg(f,m)} = D_f \cup D_m$, so it follows that $D_f \subseteq
D_{\mapg(f,m)}$, for any $m$. 
\end{proof}

\begin{corollary}
If $D_f = \emptyset$, then $D_{g} = \emptyset$.
\end{corollary}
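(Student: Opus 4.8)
The plan is to obtain the conclusion immediately from the defining clause of $\mop$, exploiting the fact that a universally quantified statement over the empty set is vacuously true. Write $g \coloneqq \mop(f)$ as in the proof of Lemma~\ref{lemma1}. The definition of $\mop$ splits on whether $\forall x \in D_f \,.\, P(x)$ holds: if it does, then $\mop(f) = [x \in D_f \mapsto e]$; otherwise $\mop(f) = \mapg(f, \mop(\bb(f)))$. When $D_f = \emptyset$, the guard $\forall x \in D_f \,.\, P(x)$ is vacuously satisfied, so the first clause applies and $g = [x \in \emptyset \mapsto e]$, i.e.\ the empty function. Hence $D_g = \DOMAIN [x \in \emptyset \mapsto e] = \emptyset$, which is exactly the claim.

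Equivalently, I can read the result straight off the proof of Lemma~\ref{lemma1}: that argument already establishes that whenever $\forall x \in D_f \,.\, P(x)$ holds we have $D_f = D_g$. Instantiating this branch with the empty $D_f$ yields $D_g = \emptyset$. So the corollary requires no new ideas beyond the lemma's proof; it is simply the degenerate case in which the recursion of $\mop$ never unfolds.

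There is no genuine obstacle here — the only point to check is that the recursive branch $\mapg(f, \mop(\bb(f)))$ is never triggered, which is immediate since that branch is entered only when some domain element fails $P$, and $D_f$ has no elements at all. In particular, no induction on the potential function $\gamma$ and no appeal to well-foundedness is needed for this statement.
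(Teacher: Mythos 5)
Your proof is correct and follows exactly the paper's argument: the vacuous truth of $\forall x \in D_f\,.\,P(x)$ over the empty domain forces the non-recursive branch of $\mop$, giving $D_g = D_f = \emptyset$. No differences worth noting.
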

\begin{proof}
If $D_f$ is empty, then it is vacuously true that $\forall x \in D_f \ .\ P(x)$, and so $D_f = D_g$ by definition.
\end{proof}

We define an auxiliary function $\alpha$ that assigns every function of the type $a \to Set(a)$ a value in $\nat \cup \{-\infty\}$, defined as:
\[
\alpha(f) \coloneqq \sup\left\{ \gamma(v) \mid v \in D_f \right\}
\]

\begin{lemma}\label{lemma2}
Let $f$ be a function, s.t. for any $x \in D_f$ it is the case that $f[x] = V(x)$. Then
\[
\alpha(f) \ge 0 \Rightarrow \alpha(\bb(f)) < \alpha(f)
\]
\end{lemma}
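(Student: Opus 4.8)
The engine of the argument is that $\bb$ turns a map into one whose domain is the union of the old map's values, so that the hypothesis $f[x]=V(x)$ identifies $D_{\bb(f)}$ with $\bigcup\{V(v) : v \in D_f\}$; the strict descent of the potential $\gamma$ along $V$ then forces $\alpha$ strictly down, and the one genuine subtlety is that this descent must survive the passage to suprema, which is where integrality of $\gamma$ is used.

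Concretely, I would first unfold $\bb$ from its definition in Figure~\ref{fig1}: $D_{\bb(f)} = \bigcup\{f[v] : v \in D_f\}$, which by the assumption $f[v]=V(v)$ equals $\bigcup\{V(v) : v \in D_f\}$. I would also note that $\alpha(f) \ge 0$ excludes $D_f = \emptyset$ (otherwise $\alpha(f) = \sup\emptyset = -\infty$) and pins $\alpha(f)$ to a genuine element of $\nat$. Then I would split on whether $D_{\bb(f)}$ is empty: if so, $\alpha(\bb(f)) = -\infty < 0 \le \alpha(f)$; if not, fix an arbitrary $w \in D_{\bb(f)}$, pick $v \in D_f$ with $w \in V(v)$, and use the defining property of $\gamma$ together with the definition of the supremum to get $\gamma(w) < \gamma(v) \le \alpha(f)$. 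Because $\gamma(w), \gamma(v) \in \nat$, this sharpens to $\gamma(w) \le \alpha(f) - 1$; since the bound is uniform in $w$, taking the supremum over $D_{\bb(f)}$ gives $\alpha(\bb(f)) \le \alpha(f) - 1 < \alpha(f)$.

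The step I expect to require the most care is exactly this upgrade from a pointwise strict inequality to a strict inequality between suprema: it is false for real-valued potentials, so the proof must lean on $\gamma$ being integer-valued and on $\alpha(f)$ being finite --- the latter being precisely what the hypothesis $\alpha(f) \ge 0$ guarantees. I would also make sure the empty-domain case is absorbed cleanly by the $\sup\emptyset = -\infty$ convention, so that it does not turn into a separate side condition.
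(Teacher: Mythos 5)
Your proof is correct and follows essentially the same route as the paper's: unfold $\bb$ to identify $D_{\bb(f)}$ with $\bigcup_{v\in D_f} V(v)$ via the hypothesis $f[v]=V(v)$, dispose of the empty case with the $\sup\emptyset=-\infty$ convention, and otherwise bound $\gamma(w)<\gamma(v)\le\alpha(f)$ pointwise before passing to the supremum. The only divergence is in justifying that last passage: the paper notes that all sets in question are finite, so the supremum is attained and strictness is preserved, whereas you invoke integrality of $\gamma$ to sharpen the pointwise bound to $\gamma(w)\le\alpha(f)-1$ --- both are valid, and yours is marginally more self-contained since it leans only on $\gamma$ being $\nat$-valued and $\alpha(f)$ being finite rather than on finiteness of $D_{\bb(f)}$.
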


\begin{proof}

Assume $\alpha(f) \ge 0$. This is trivially equivalent to saying $D_f \ne
\emptyset$.  We see that $\bb$ is defined as:

\[
\bb(f)[y] \coloneqq V(y)
\]

$D_{\bb(f)}$ is defined as:

\[
D_{\bb(f)} \coloneqq \bigcup_{v \in D_f} f[v]
\]

It follows that:

\begin{align*} 
\alpha(\bb(f)) &= \sup\left\{ \gamma(v) \mid v \in D_{\bb(f)} \right\} \\
&= \sup\left\{ \gamma(v) \mid v \in \bigcup_{w \in D_f} f[w] \right\} \\
&= \sup\left\{ \gamma(v) \mid v \in \bigcup_{w \in D_f} V(w) \right\} \\
\end{align*}

By the precondition, we know that:

\[
\forall y \in V(x) \ .\ \gamma(y) < \gamma(x) 
\]

If we take an arbitrary $v \in \bigcup_{w \in D_f} V(w)$, there exists a $w \in
D_f$, s.t. $v \in V(w)$. The precondition then implies, that $\gamma(v) <
\gamma(w)$.  Additionally, since $w \in D_f$, by the definition of $\alpha(f)$,
it follows that $\gamma(w) \le \alpha(f)$.  Since all sets in question are
finite, $D_{\bb(f)}$ is either empty, in which case $\alpha(\bb(f)) = -\infty$
and the lemma trivially holds, or the supremum is actually a maximum, and
strict inequality is maintained when we infer:

\[
\alpha(\bb(f)) = \sup\left\{ \gamma(v) \mid v \in \bigcup_{w \in D_f} V(w) \right\} < \alpha(f)
\]
\end{proof}

As a trivial corollary to this lemma, we observe that $\alpha(\bb(f)) =
\alpha(f)$ iff $\alpha(f) = -\infty$.

\recallthm{thm}{\thmBody}
\begin{proof}
We prove this by using induction over $\alpha(f)$.

Base case $\alpha(f) = -\infty$: This can only be true if $D_f$ is empty.
However, it is then vacuously true that:

\[
\forall x \in D_f \ .\ P(x)
\]

It follows from the corollary to Lemma \ref{lemma1} that $D_g$ is empty too. In
that case, any universally quantified statement over $D_g$ vacuously holds.

\emph{General case}: Let $\alpha(f) = N \ge 0$ and assume the theorem holds for
any $f'$, for which $\alpha(f') < N$.  We observe that $f' \coloneqq \bb(f)$
satisfies all of the requirements. By definition, we have:

\[
\bb(f)[y] \coloneqq V(y)
\]

The above holds true for every element in its domain, so the precondition of
the theorem is met. Additionally, by Lemma~\ref{lemma2}, we know that
$\alpha(f') < N$, so we can use the induction assumption to conclude that:

\[
\forall x \in D_{\mop(f')} \ .\ \mop(f')[x] = \op(x)
\]

If it is the case that $\forall x \in D_f \ .\ P(x)$, then $\mop(f)[x] = e$ for
all domain elements, since $D_f = D_{\mop(f)}$, by the same reasoning we used
in the case where $D_f$ was empty. Similarly, for each such $x$, $\op(x) = e$
by the definition of $\op$, since $P(x)$ holds.

Next, we select an arbitrary $x \in D_{\mop(f)}$. We look at the case split in $\mapg$:
\begin{enumerate}
\item If $x \in D_{\mop(f')}$, by the induction hypothesis, we know that $\mop(f')[x] = \op(x)$.
\item If $x \notin D_{\mop(f')}$, but $P(x)$ holds, we know $\op(x) = e$. Here, $\mapg(f, \mop(f'))$ trivially evaluates to $e$ as well.
\item Otherwise, it remains to be shown that the following holds true:

\[
G(x, F(f[x], \mop(f'))) = \op(x)
\]

Since we know $P(x)$ does not hold for this $x$, it follows that:
\[
\op(x) = G(x, F(V(x), Op))
\]

Hence, it suffices to see that:

\[
F(f[x], \mop(f')) = F(V(x), Op)
\]

By the property of $f$, $f[x] = V(x)$. As $x \in D_f$, $f[x] \subseteq D_{f'}$
by definition and $D_{f'} \subseteq D_{\mop(f')}$ by Lemma~\ref{lemma1}, so for
every element $v \in V(x)$ it is the case that $\op(v) = \mop(f')[v]$. Since
$F$ is either a map or a filter, the equality above follows.

\end{enumerate}

\end{proof}

\recallcorollary{corollary}{\corollaryBody}
\begin{proof}
\[
f \coloneqq [ v \in \{x\} \mapsto V(x) ]
\]
obviously satisfies the precondition of Theorem \ref{thm}:
\[
\forall y \in D_f \ .\ f[y] = V(y)
\]
Since $x \in D_f \subseteq D_{\mop(f)}$ by Lemma~\ref{lemma1}, we can use Theorem~\ref{thm}, to conclude that $\mop(f)[x] = \op(x)$.
\end{proof}

\end{document}